\documentclass[letterpaper,12pt]{amsart}
\textwidth=16.00cm 
\textheight=23.00cm 
\topmargin=0.00cm
\oddsidemargin=.00cm 
\evensidemargin=0.00cm 
\headheight=0cm 
\headsep=0.5cm

\usepackage{latexsym,array,delarray,amsthm,amssymb,epsfig}
\usepackage{color}

\usepackage{caption,enumerate}
\usepackage{subcaption}
\usepackage{tikzsymbols}


\theoremstyle{plain}
\newtheorem{thm}{Theorem}[section]
\newtheorem{lemma}[thm]{Lemma}
\newtheorem{op}[thm]{Proposition}
\newtheorem{cor}[thm]{Corollary}
\newtheorem{conj}[thm]{Conjecture}
\newtheorem*{thm*}{Theorem}
\newtheorem*{lemma*}{Lemma}
\newtheorem*{prop*}{Proposition}
\newtheorem*{cor*}{Corollary}
\newtheorem*{conj*}{Conjecture}

\theoremstyle{definition}
\newtheorem{defn}[thm]{Definition}
\newtheorem{ex}[thm]{Example}

\newtheorem{ques}[thm]{Question}

\theoremstyle{remark}
\newtheorem*{rmk}{Remark}


\newcommand{\rr}{\mathbb{R}}
\newcommand{\cc}{\mathbb{C}}




\newcommand{\ind}{\mbox{$\perp \kern-5.5pt \perp$}}

\begin{document}

\title{Distinguishing Phylogenetic Networks}
\author{Elizabeth Gross and Colby Long}

\email{elizabeth.gross@sjsu.edu}
\email{long.1579@mbi.osu.edu} 
\address{Department of Mathematics and Statistics, One Washington Square,  San Jos\'{e} State University, San Jos\'{e}, CA, 95192-0103, USA}
 \address{Mathematical Biosciences Institute, The Ohio State University, 1735 Neil Ave., Columbus OH, 43210, USA}

\begin{abstract}  Phylogenetic networks are becoming increasingly popular in phylogenetics since they have the ability to describe a wider range of evolutionary events than their tree counterparts.  In this paper, we study Markov models on phylogenetic networks and their associated geometry. We restrict our attention to large-cycle networks, networks with a single undirected cycle of length at least four. Using tools from computational algebraic geometry, we show that the semi-directed network topology is generically identifiable for Jukes-Cantor large-cycle network models.   
\end{abstract}

\maketitle

\section{Introduction}
\label{sec: introduction}

There are many reasons why a single phylogenetic tree may fail to fully describe the evolutionary history of a group of taxa. Issues such as hybridization, horizontal gene transfer, and incomplete lineage sorting are known to cause discordance between gene trees \cite{maddison1997,pamilonei1988,syvanen1994}.
To account for hybridization and horizontal gene transfer, \emph{evolutionary phylogenetic networks} have recently come to the foreground of phylogenetics.  These networks model the process of evolution along a directed acyclic graph where certain edges in the network represent reticulation events. Since the network topology is meant to reflect the actual history of a group of taxa, the topologies are often constrained to a class of networks considered biologically reasonable. Because of their increasing importance, many concepts from modeling and inference on phylogenetic trees are now being applied to phylogenetic networks. 
For example, a number of authors have incorporated the coalescent process into network models to account for incomplete lineage sorting \cite{Chifman2016, Solis2016, Yu2011}. There have also been a number of papers exploring network inference  \cite{gauthierlapointe2007, jinetal2006, Kubatko2009} and the combinatorial properties of different classes of phylogenetic networks \cite{Francis2016,Huber2015,Semple2016}. 
In fact, the body of work in this area has grown to the point that there are now several surveys on the topic (e.g., \cite{Moret2004,Huson2010b,Nakhleh2015}). 
Despite this, the present paper is one of the first to analyze
the algebraic and geometric properties of phylogenetic networks. 

The implicit goal underlying much of this work is to eventually be able to infer
the phylogenetic network that explains the evolutionary history of a group of taxa from biological data.
 Thus, a fundamental question about any phylogenetic network model is the identifiability of the underlying network, that is, whether or not the network topology can be uniquely identified from data generated by the network. Some positive results in this direction have been proven, for example, in \cite{Solis2016} it is shown that there are $4$-leaf networks that can be uniquely identified from the quartet topology distribution induced by the coalescent process. However, there are other results that should give pause to those attempting to reconstruct networks. For example, it is shown in \cite{Moret2004} that two distinct networks can share the same set of subtrees. Likewise, in \cite{Pardi2015} the authors show that two topologically distinct weighted networks can share the same set of weighted subtrees.

In this paper, we will consider Jukes-Cantor network models where the process of DNA sequence evolution is modeled as a Markov process proceeding along an $n$-leaf directed acyclic graph (DAG).
We are particularly interested in
the identifiability (or lack thereof) of the network topology from the distribution on $n$-tuples of DNA bases generated by the network.
This is distinct from the notion of identifiability discussed in 
\cite{Pardi2015}, as we do not assume any knowledge about which sites were produced by 
the same subtree of the network.  The two-state Cavender-Farris-Neyman model may seem the more natural starting point for 
our exploration of network identifiabilty. However, as is evident from our computations in 
Proposition \ref{prop: CFN}, the restricted coordinate space
for this model makes it impossible to identify small networks from one another, our main
strategy for eventually proving identifiability in the Jukes-Cantor case. 

 Since the Jukes-Cantor model is time-reversible, the precise location of the root within the network will be unidentifiable from the distribution.  However, we cannot simply study the unrooted topology of networks without orientation, since \emph{reticulation edges}, edges directed into vertices of indegree two, play a special role defining the distribution. 
Thus, our results concern the identifiability of the 
\emph{semi-directed network topology}, the unrooted, undirected network with distinguished reticulation edges. We will also restrict our attention to networks with only a single \emph{reticulation vertex} which we call \emph{cycle-networks}. We will refer to the set of all cycle-networks with cycle length greater than 4 as the class of \emph{large-cycle networks}.
The main result of this paper is the following theorem. 

\begin{thm}  
\label{thm: main}
The semi-directed network topology parameter of large-cycle Jukes-Cantor network models is generically identifiable. 
\end{thm}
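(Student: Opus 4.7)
The plan is to pose the problem as a comparison of algebraic varieties. To each semi-directed large-cycle topology $N$ on a fixed leaf set, I associate the Zariski closure $V_N$ of the image of the Jukes-Cantor parameterization map, and I aim to prove that $V_{N_1} \ne V_{N_2}$ whenever $N_1 \not\cong N_2$. Once this is established, generic identifiability follows because a distribution $p$ lying in $V_{N_1} \cap V_{N_2}$ must then avoid the generic locus of one of the two varieties. The key structural observation I would exploit is that a Jukes-Cantor cycle-network model is a two-component tree-mixture: deleting one of the two reticulation edges yields a tree $T^{+}$ and deleting the other yields $T^{-}$, so the network distribution has the form $\gamma\, p_{T^{+}} + (1-\gamma)\, p_{T^{-}}$, where $\gamma$ is the reticulation mixing parameter.

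The first step is a reduction from the $n$-leaf case to small base cases via marginalization. Projecting onto the coordinates indexed by a small leaf subset $L$ sends $V_N$ into the model variety of the induced semi-directed network $N|_L$, obtained by suppressing leaves outside $L$ and contracting degree-two vertices (retaining the reticulation labels on the cycle whenever it survives). Hence, to show $V_{N_1} \ne V_{N_2}$, it suffices to find a subset $L$ with $N_1|_L \not\cong N_2|_L$ such that the resulting small networks are already known to be distinguishable. I would therefore phrase the theorem as the conjunction of (i) a combinatorial reduction, and (ii) an explicit small-case identifiability statement.

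The second step is the verification of the base cases. For Jukes-Cantor, the coordinate space is large enough that generating invariants can be computed effectively: for each distinct semi-directed topology on four and five leaves, I would compute (via elimination and Gr\"obner basis methods) polynomial invariants in the vanishing ideal of $V_N$ that do not vanish on the model variety of any competing small topology. The fact that this computation succeeds for Jukes-Cantor but not for the two-state Cavender-Farris-Neyman model, whose smaller coordinate space obstructs separation, is exactly the contrast flagged by the authors in their discussion of the CFN proposition, and it is the reason Jukes-Cantor is the right setting for the inductive strategy.

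The hard part will be the combinatorial step. I must show that any two non-isomorphic large-cycle topologies remain distinguishable after restricting to an appropriate small leaf subset: when $N_1$ and $N_2$ differ only in how a single pendant subtree attaches to the cycle, I need a small restriction that still witnesses the attachment point and the reticulation labeling; when they differ in cycle length, I must handle the fact that cutting the cycle can reduce the induced subnetwork to a tree, for which the tree-identifiability of Jukes-Cantor must then carry the argument. Handling these cases uniformly, and then checking computationally that the finite list of small-network invariants indeed separates all orbits of semi-directed topologies, is the core of the proof.
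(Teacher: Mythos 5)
Your overall strategy --- reduce to small leaf subsets by marginalization and settle the base cases computationally --- is the same as the paper's. But there is a genuine gap in your opening reduction: you claim that $\mathcal V_{\mathcal N_1} \neq \mathcal V_{\mathcal N_2}$ suffices for generic identifiability. It does not. If $\mathcal V_{\mathcal N_1} \subsetneq \mathcal V_{\mathcal N_2}$, the varieties are distinct, yet \emph{every} distribution arising from $\mathcal N_1$ lies in $\mathcal V_{\mathcal N_2}$, so no choice of parameters for $\mathcal N_1$, generic or otherwise, can be told apart from $\mathcal N_2$. What is actually needed is mutual non-containment, i.e.\ that $\mathcal V_{\mathcal N_1} \cap \mathcal V_{\mathcal N_2}$ be a proper subvariety of \emph{each} factor (the paper's notion of distinguishability). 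This is not a pedantic distinction here: Figure \ref{fig: 2IndistinguishableNetworks} exhibits two distinct $4$-leaf cycle-networks with $\mathcal V_{\mathcal N_1} \subseteq \mathcal V_{\mathcal N_2}$, and the poset of Figure \ref{fig: poset} shows such containments are pervasive among small-cycle networks --- which is precisely why the theorem must be restricted to large-cycle networks.

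The same oversight undermines your base-case plan. You propose to compute, for each small topology, invariants that fail to vanish on every competing small topology; such invariants do not always exist. The $3$-leaf $3$-cycle network has the zero ideal (Proposition \ref{prop: notidentifiable}), the two distinct $4$-leaf $3$-cycle networks of Figure \ref{fig: 4LeafTopologies} have \emph{equal} ideals (Proposition \ref{prop: sameskeleton}), and every $4$-leaf $3$-cycle variety sits inside a $4$-cycle variety. Consequently the combinatorial step cannot merely find a subset $L$ with $N_1|_L \not\cong N_2|_L$; for each ordered pair it must steer the restriction into a configuration where the needed one-directional non-containment is actually available --- a $4$-leaf $4$-cycle against a tree or $3$-cycle (Corollary \ref{cor: cyclecontainment}), two distinct $4$-leaf $4$-cycles (Corollary \ref{cor: distinct4cycles}), two distinct trees, or, in the hardest case where one cycle partition refines the other, a specific $5$-leaf restriction separated by an explicit cubic in the $5$-sunlet ideal. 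Your proposal neither identifies this case analysis nor acknowledges that the two directions of non-containment may require different leaf subsets, and without that the argument does not close.
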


Markov models on networks with a single reticulation vertex are very closely 
related to 2-tree mixture models but with some subtle differences that we discuss in Section \ref{sec: obtaindistribution}. 
Using techniques from algebraic statistics, it is shown in \cite{Allman} that the tree parameters of a 2-tree Jukes-Cantor mixture are generically identifiable. Here we adopt a similar approach. We associate to each network $\mathcal{N}$ an algebraic variety $\mathcal V_\mathcal{N}$ that is the Zariski closure of the set of probability distributions attained by varying the numerical parameters in the model on $\mathcal{N}$. We then study the associated ideals of the networks to find algebraic invariants that distinguish networks from one another. The two networks in Figure \ref{fig: 2IndistinguishableNetworks} demonstrate that the generic identifiability results for 2-tree mixtures do not apply 
for phylogenetic networks. These networks have different semi-directed network topologies and induce different
multisets of embedded trees. Suprisingly, however,  the algebraic variety for the network on the left is 
properly contained in that of the network on the right. 
This example highlights another contrast between cycle-networks and 2-tree mixtures, 
as the varieties for distinct $n$-leaf cycle-networks need not even be the same dimension.

\begin{figure}
\caption{Two phylogenetic networks for which $\mathcal V_{\mathcal{N}_1} \subseteq \mathcal V_{\mathcal{N}_2}$.}
\label{fig: 2IndistinguishableNetworks}
\includegraphics[width=10cm]{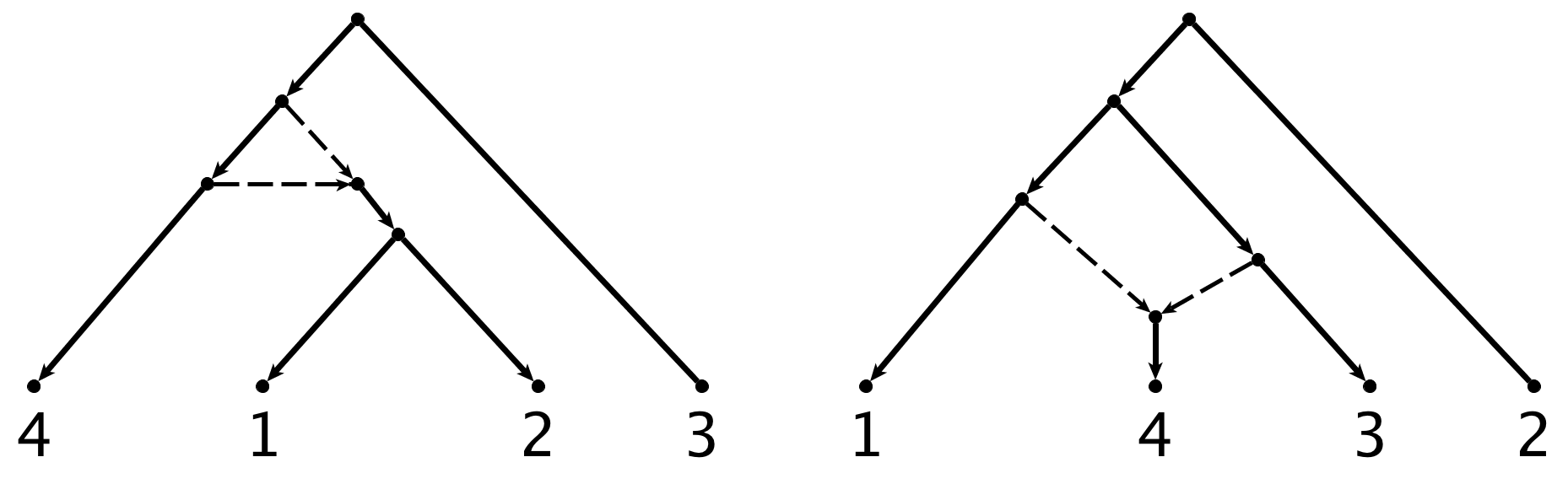}
\end{figure}

This paper is organized as follows. In Section \ref{sec:phylogenetic-networks}, we introduce the appropriate network terminology and rigorously define Jukes-Cantor network models to show how to obtain a probability distribution on DNA site patterns from a network. In Section \ref{sec: algstats}, we introduce the concept of generic identifiability and the algebraic background necessary to
prove the main results. Finally, in Section \ref{sec: results}, we present the main results about the identifiability of the semi-directed network topology. As Figure \ref{fig: 2IndistinguishableNetworks} illustrates, the network topologies are not generically identifiable, but we will be able to restrict to a class of networks that preserves identifiability. Additionally, we will be able to show many specific instances where identifiability fails. In Section \ref{sec: discussion}, we conclude with a discussion about the consequences of these results for inferring phylogenies.

\section{Phylogenetic Networks}
\label{sec:phylogenetic-networks}

The following network notation and terminology is 
adapted from \cite{Francis2016, FS15, Semple2016}.

\begin{defn}
A phylogenetic network  
$\mathcal{N}$
 on $X$ is a rooted acyclic digraph with no edges in parallel and satisfying the following properties:
\begin{enumerate}[(i)]
\item the root has out-degree two;
\item a vertex with out-degree zero has in-degree one, and the set of vertices with out-degree zero is $X$;
\item all other vertices either have in-degree one and out-degree two, or in-degree two and out-degree one.
\end{enumerate}
\end{defn}

Note that these are sometimes also referred to as 
\emph{binary phylogenetic networks}.
A vertex with indegree one and outdegree two is called a \emph{tree vertex} and a vertex with
 indegree two and outdegree one is called a 
 \emph{reticulation vertex} or simply a {reticulation}. 
 Edges directed into a reticulation edge are called \emph{reticulation edges} and all other edges are called \emph{tree edges}.
 Recall that the reason for introducing phylogenetic networks
is to model possible hybridization events and horizontal gene transfer. These events of course can only occur when two species
coexist in time. Considering only directed acyclic graphs precludes any paradoxes wherein genetic information is transported back in time. However, as noted in the introduction, due to the time-reversibility of the Jukes-Cantor model, we will not actually be able to identify the location of the root in the network from the models we define. Therefore, in this paper we are primarily interested in recovering the underlying \emph{semi-directed} network topology of a phylogenetic network. The semi-directed network topology is obtained from a phylogenetic network by suppressing the root node and undirecting all tree edges while the reticulation edges remain directed.

The class of phylogenetic networks is quite large and the algebraic approach we adopt becomes increasingly complicated as the number of reticulations in the network increases. Therefore, we begin here by studying networks that contain at most one reticulation vertex. Such structures are necessarily \emph{level-1} networks \cite{CJS04}, networks in which every edge belongs to at most one cycle. In fact, networks with exactly one reticulation edge contain a single undirected cycle, which motivates the following definition.

\begin{defn}\label{def:k-cycle} A \emph{cycle-network} is a semi-directed network with one reticulation vertex.  A {\emph k-cycle network} is a cycle-network with cycle size $k$. 
\end{defn}

Note that the cycle of a cycle-network always contains the reticulation vertex and the two reticulation edges. 
We will refer to an internal vertex contained in the cycle of 
a cycle-network as a \emph{cycle vertex}. For subsequent sections, it will be helpful to establish some conventions for $n$-leaf $k$-cycle networks. We can view an $n$-leaf $k$-cycle network $\mathcal N$ as a $k$-cycle with a tree $\mathcal T_v$ affixed to each cycle vertex $v$. 
Let $A_v$ be the leaf label set of $\mathcal{T}_v$. 
Then the cycle vertices of $\mathcal N$ induce an ordered partition of $[n]$. Label the reticulation vertex by $v_1$ and label the remaining cycle vertices $v_2, \ldots, v_k$ in a clockwise fashion so that $\min A_{v_2} < \min A_{v_k}$.  Using the shorthand $A_i$ for $A_{v_i}$, $\mathcal{N}$ induces the ordered partition $A_1 | A_2 | \ldots | A_k$.
As an example, Figure \ref{fig: kcycle network} depicts a 5-cycle network and the caption 
describes the ordered partition that it induces.
We call the unique $k$-leaf $k$-cycle network topology the \emph{$k$-sunlet}.

In Section \ref{sec: largekcycles}, we will be concerned with $k$-cycle networks with $k \geq 4$.  This motivates the following definition.

\begin{defn}\label{def:large-cycle} The set of \emph{large-cycle networks} is the collection of all $k$-cycle networks with $k \geq 4$. 
\end{defn}
\noindent The set of large-cycle networks is the focus of our main theorem, Theorem \ref{thm: main}.

\begin{figure}
\caption{A 5-cycle network that induces the ordered partition 
$A_1 = \{ 4,5\},
A_2 = \{ 2,6, 9\},
A_3 = \{ 8 \},
A_4 = \{1 \},$ and
$A_5 = \{3,7\}$.}
\label{fig: kcycle network}
\includegraphics[width=5cm]{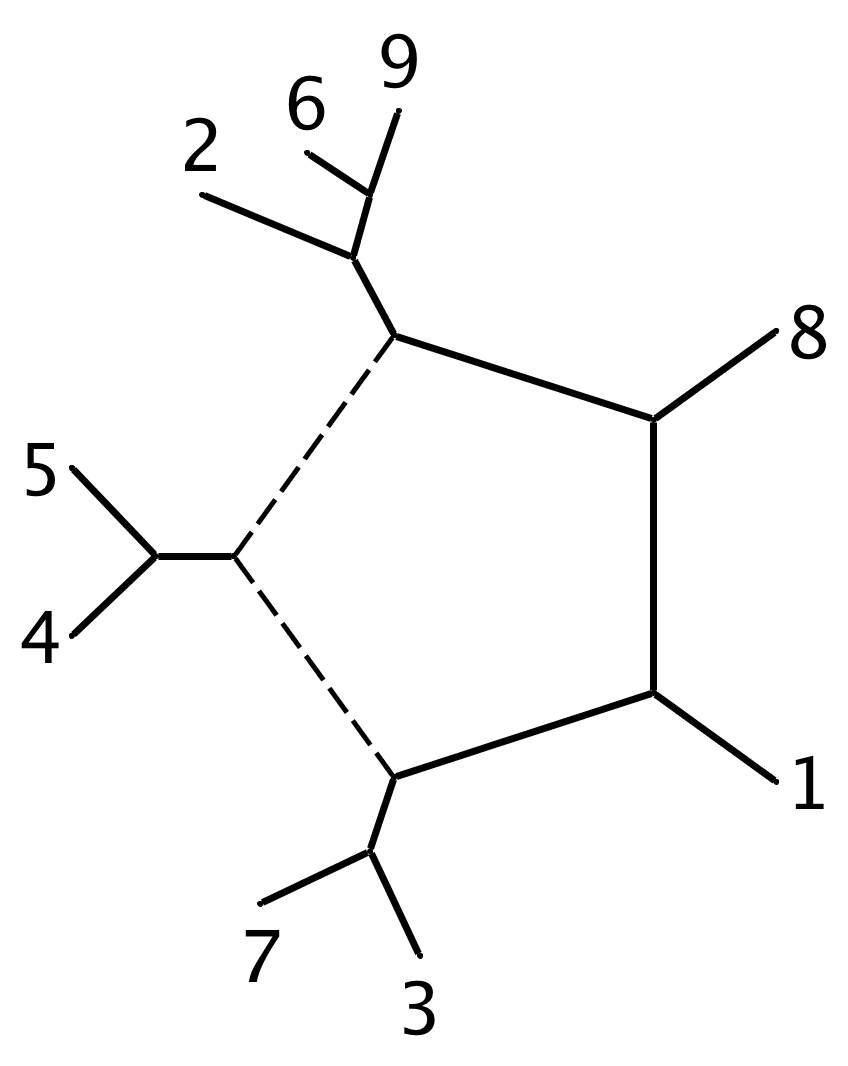}
\end{figure}

\subsection{Obtaining a distribution from a phylogenetic network}
\label{sec: obtaindistribution}

 In this section we describe
how to obtain a probability distribution on $n$-tuples of DNA bases
from an $n$-leaf phylogenetic network.
 As the network models we wish to discuss are a generalization of the nucleotide substitution models on phylogenetic trees, we begin by briefly reviewing these models. 

\subsubsection{Markov processes on phylogenetic trees}
\label{sec: tree models}

A phylogenetic model is a statistical model of molecular sequence evolution for a collection of $n$ taxa at a single DNA site. The tree parameter of such a model is an $n$-leaf rooted leaf-labeled tree $\mathcal{T}$ where the leaf vertices are labeled by the $n$ taxa. The internal nodes of the tree represent ancestors of the taxa at the leaves.
We denote the root of the tree by 
$\rho$ and associate to each node $v$ of $\mathcal{T}$ a random variable $X_v$ with state space $\{A,C,G,T\}$, corresponding to the four DNA bases.  The state of the random variable $X_v$ is meant to indicate the DNA base at the particular site being modeled in the taxon at $v$. 

Let $\boldsymbol{\pi}= (\pi_A,\pi_C,\pi_G,\pi_T) \in \mathbb{R}^4$ be the root distribution with $\pi_i = P(X_\rho = i)$, and  associate to each edge $e=uv$ of $\mathcal T$ a $4 \times 4$ transition matrix $M^e$ where the rows and columns are indexed by the elements of the state space. Assuming $u$ is the vertex closer to the root, $M^e_{ij}$ is equal to the conditional probability 
$P(X_v = j | X_u = i)$. The entries of the transition matrices are called the \emph{stochastic parameters} of the model. For a particular choice of parameters, the model returns a probability distribution on the set of $n$-tuples of DNA bases
that may be observed at the leaves of $\mathcal{T}$. To compute this distribution, we first consider an assignment of states $\phi \in \{A,C,G,T\}^{ V(\mathcal T)}$ to the vertices of $\mathcal{T}$ where $\phi(v)$ is the state of $X_v$. Then the probability of observing the state $\phi$ can be computed using the root distribution and the transition matrices. Specifically, letting $\Sigma(\mathcal{T})$
 be the set of edges of $\mathcal{T}$, this probability is equal to 

$$\displaystyle \prod_{e=uv \in \Sigma(\mathcal{T})} \
\pi_{\phi(\rho)}
M^e_{\phi(u),\phi(v)}.$$
Notice that this is a monomial in the stochastic parameters of the model. The probability of observing a particular state at the leaves can be obtained by marginalization, i.e. summing over all possible states of the internal nodes.
Therefore, the distribution on all $n$-tuples of possible leaf states is given by a polynomial map from the stochastic parameter space $\Theta _\mathcal{T}$ to the probability simplex 
$$\psi_\mathcal{T}: \Theta _\mathcal{T} \to \Delta^{4^n - 1}.$$

The model described above is referred to as the general Markov model; other specific phylogenetic models can be obtained by restricting the stochastic parameters.  For example, for the Jukes-Cantor model, all transition matrices are assumed to be of the form pictured in Figure \ref{fig: jcmatrix}. Because the rows of this matrix must sum to one, there is essentially 
a single parameter for each edge. 

Once a particular 
substitution model and a tree are specified, the image of the map $\psi_\mathcal{T}$ is called
the \emph{model associated to $\mathcal{T}$}, denoted 
$\mathcal{M}_\mathcal{T}$. The fact that $\psi_\mathcal{T}$ is a polynomial map makes the model $\mathcal{M}_\mathcal{T}$ amenable to study with algebraic geometry.

\begin{figure}
\caption{A transition matrix for the Jukes-Cantor model.}
\label{fig: jcmatrix}
$M^{e} = \begin{pmatrix}
     \alpha & \beta  & \beta  & \beta  \\
     \beta  & \alpha & \beta  & \beta  \\
     \beta  & \beta  & \alpha & \beta  \\
     \beta  & \beta  & \beta  & \alpha
    \end{pmatrix}$
\end{figure}

\subsubsection{Markov processes on phylogenetic networks}

Here we describe how to obtain a distribution on $n$-tuples of DNA bases
from a phylogenetic network by taking a convex combination of the distributions from 
phylogenetic tree models. 
These network models are also described in
\cite[\S 3.3]{Nakhleh2011}.  For this exposition, we assume that the network $\mathcal{N}$ is a \emph{tree-child} network \cite{Cardona2007}, that is, we assume that the child of a reticulation vertex is always a tree vertex.

Let $\mathcal{N}$ be an $n$-leaf phylogenetic network and 
associate a $4\times 4$ 
transition matrix from a nucleotide substitution model
to each edge of $\mathcal{N}$. 
Suppose $\mathcal{N}$ has $m$ reticulation vertices $w_1, \ldots, w_m$. 
Since each $w_i$ has indegree two, 
there are two edges, $e^0_i$ and $e^1_i$, 
directed into $w_i$. For $1 \leq i \leq m$, 
independently delete $e^0_i$ with probability
$\delta_i \in [0,1]$, otherwise, delete $e^1_i$.
Intuitively, the parameter $\delta_i$ corresponds to the probability that a particular site was inherited along edge $e^0_i$.
Encode this set of choices with a binary vector $\sigma \in \{0,1\}^m$ where a $0$ in the $i$th coordinate indicates that edge $e^0_i$ was deleted.  After deleting the $m$ edges, the result is a rooted $n$-leaf tree $\mathcal{T}_{\sigma}$ with a set of transition matrices $\theta_{\sigma}$ and corresponding probability distribution $P_{\mathcal {T}_{\sigma}, \theta_{\sigma}} \in \mathcal M_{\mathcal T}$ on the leaf states. We can then define a distribution on $n$-tuples of DNA bases from the network as follows
 
 $$
P_{\mathcal N, \theta} = 
\displaystyle \sum_{\sigma \in \{0,1\}^m}
( \prod_{i=1}^m \delta_i ^{1-\sigma_i}(1-\delta_i)^{\sigma_i} ) P_{\mathcal {T}_{\sigma}, \theta_{\sigma}} .
$$

Notice, that while the phylogenetic network model is a mixture model,
 it is not simply a $2^m$-tree phylogenetic mixture model.
This is because in a phylogenetic mixture model, the entries of the transition matrices 
are chosen independently for each of the trees in the mixture. 
However, in the network model, the transition matrix parameters are chosen for the network edges and then inherited by the trees embedded in the network as pictured in Figure \ref{fig: TreeMixture}.
It is still the case that the network model has a polynomial parameterization from the stochastic parameter space of the network to the probability simplex. For example, in the case we are considering in this paper, where $m=1$, we can denote the tree obtained by deleting 
$e^0_{1}$ by $\mathcal{T}_1$ and 
the tree obtained by deleting 
$e^1_{1}$ by $\mathcal{T}_2$.
Then the model of the network $\mathcal{M}_\mathcal{N}$ is the image of the polynomial map from the parameter space of the network to the probability simplex,
$$ 
\psi_\mathcal{N}:
\Theta_\mathcal{N} \times [0,1] 
\to
\Delta^{4^n - 1}, \text{  }
(\theta,\delta) \mapsto 
\delta\psi_{\mathcal{T}_1}(\theta) + 
( 1 - \delta)\psi_{\mathcal{T}_2}(\theta) .$$
It is also worth noting that
$\mathcal{T}_1$ and $\mathcal{T}_2$ may have the same
topology but where the network parameters associated to the edges are different.
This is the case with the 3-cycle network depicted in Figure \ref{fig: TreeMixture}.

\begin{ex}
\begin{figure}[h]
	\caption{A 4-leaf 3-cycle network $\mathcal N$ and the two embedded trees $\mathcal T_1$ and $\mathcal T_2$ obtained by deleting reticulation
	edges. }
	\label{fig: TreeMixture} 
		\includegraphics[width=14cm]{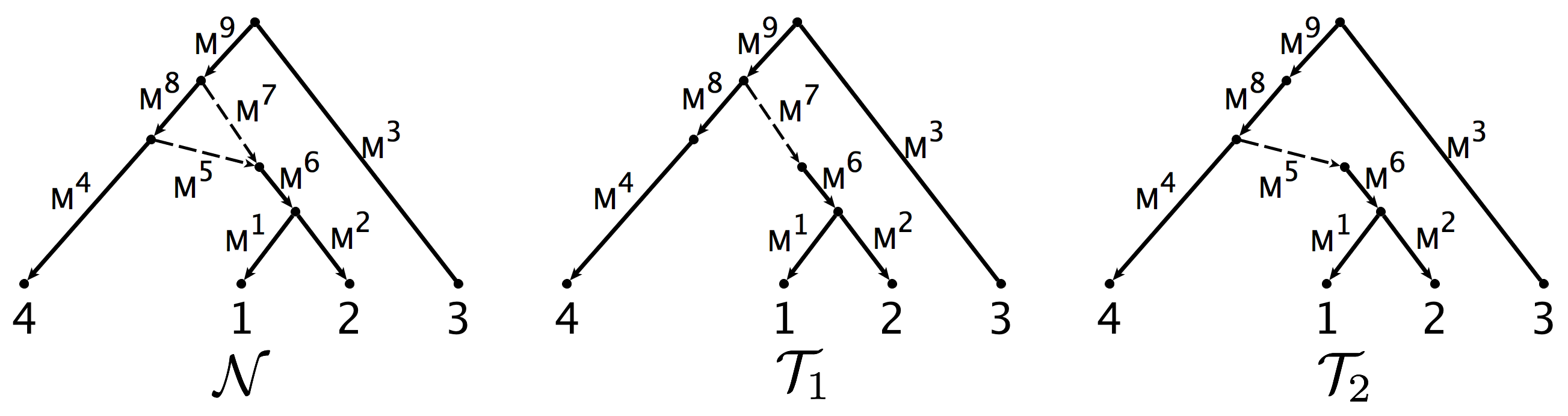}
\end{figure}

The network $\mathcal N$ in Figure \ref{fig: TreeMixture} is a network with two reticulation edges $e_5$ and $e_7$ labeled with the transition matrices $M^{e_5}$ and $M^{e_7}$ respectively (in Figure \ref{fig: TreeMixture}, the $e$'s in the superscripts are suppressed for aesthetics).  We delete $e_5$ and keep $e_7$ with probability $\delta$, and delete $e_7$ and keep $e_5$ with probability $1 -\delta$.  These two possibilities give rise to the two trees $\mathcal T_1$ and $\mathcal T_2$ in Figure \ref{fig: TreeMixture}.  Notice how the transition matrices on $\mathcal T_1$ and $\mathcal T_2$ are inherited from $\mathcal N$.  Thus, we can view the Markov model on $\mathcal N$ as a 2-tree mixture model with additional algebraic relationships among the stochastic parameters.
\end{ex}

The description we have given above works for any nucleotide substitution model. However, in a time-reversible model, the location of the root is unidentifiable \cite{felsenstein1981}. Therefore, for a time-reversible model, we obtain the same distribution by computing each $P_{\mathcal {T}_{\sigma}, \theta_{\sigma}}$ after
unrooting the tree $\mathcal {T}_{\sigma}$. In fact, we obtain the same distribution as from $\mathcal{N}$ if we instead define the model on the
semi-directed network topology of $\mathcal{N}$. 
This implies that for a time-reversible model, any two networks that share the same semi-directed network topology, such as the two networks pictured in Figure \ref{fig: sametopo}, will yield the same distribution.

\begin{figure}
\caption{Two phylogenetic networks with the same semi-directed topology.}
\label{fig: sametopo}
\includegraphics[width=10cm]{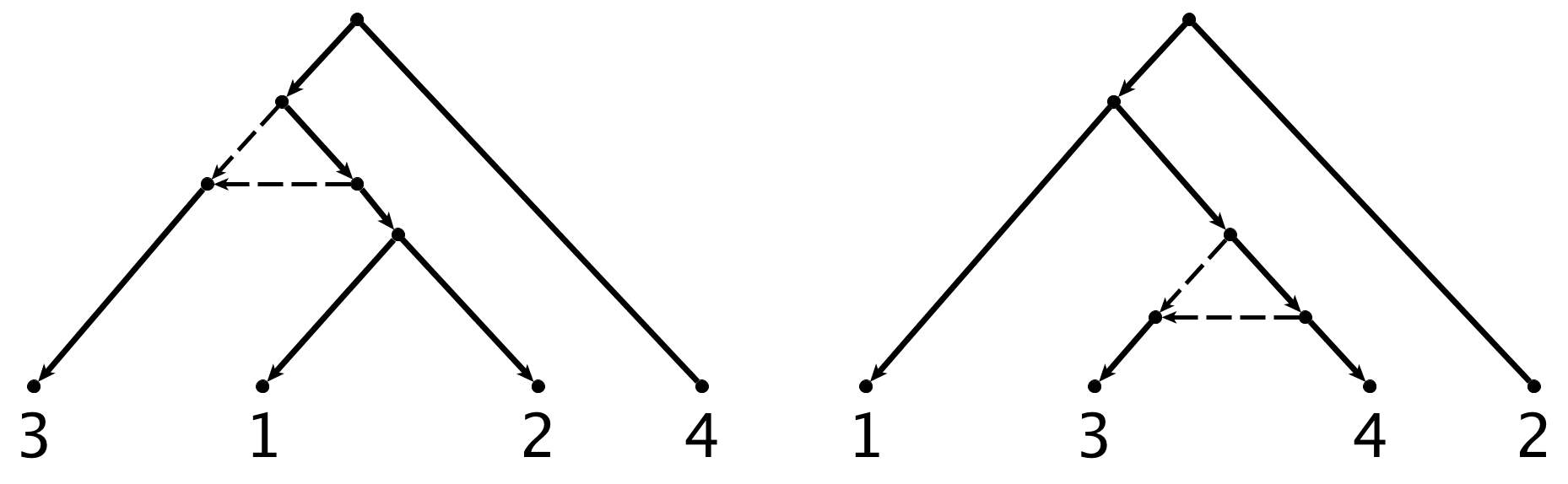}
\end{figure}

\section{Algebraic statistics and generic identifiability}
\label{sec: algstats}

One of the insights of algebraic statistics is that many properties of phylogenetic models can be determined by ignoring the stochastic restrictions on the parameters and regarding $\psi_\mathcal{T}$ as a complex polynomial map. Thus, to answer many questions, it is often enough to consider only the Zariski closure of the image of $\psi_\mathcal{T}$. This is a complex algebraic variety which we denote $\mathcal{V}_\mathcal{T}$. Likewise, in this paper, we will work with
the Zariski closure of the image of $\psi_\mathcal{N}$, the algebraic variety $\mathcal{V}_\mathcal{N}$. Once we have made this change we refer to the formerly stochastic parameters of the model as the \emph{numerical parameters} to distinguish them from the network parameter.
Assuming there are $s$ stochastic parameters, 
we slightly abuse notation and write this new map as $\psi_\mathcal{N}: \mathbb{C}^s \to \mathbb{C}^{4^n}$.

An important question about any model is whether or not the parameters of the model are identifiable. 
For phylogenetic network models, the identifiability of the underlying network parameter is particularly important. 
If we are able to find a network and a choice of stochastic parameters that yield a distribution 
that matches our data, we would like to infer the history of the taxa under consideration from the network topology. 
To do this, we must ensure that the network we have found is the only such network for which it is
possible to do so. More formally, the network topology of an $n$-leaf network model is identifiable 
if given any two $n$-leaf networks $\mathcal{N}_1$ and $\mathcal{N}_2$, the intersection of their models
$\mathcal{M}_{\mathcal{N}_1} \cap \mathcal{M}_{\mathcal{N}_2}$ is empty. This notion of identifiability tends to be too 
strong in practice, and instead, it is often only possible to prove \emph{generic identifiability}. 

\begin{defn} 
\label{defn: genericallyidentifiable}
The network parameter of a phylogenetic network model is \emph{generically identifiable} if given any two 
$n$-leaf networks, $\mathcal{N}_1$ and $\mathcal{N}_2$, the set of parameters in $\Theta_{\mathcal{N}_1}$ that 
$\psi_{\mathcal{N}_1}$ maps into $\mathcal{M}_{\mathcal{N}_2}$ is a set of Lebesgue measure zero.
\end{defn}

In other words, the network parameter is generically identifiable, if, given a specific network, the distribution obtained from a generic choice of stochastic parameters could have only come from this network. 
 To prove the generic identifiability of the network parameter of a phylogenetic network model,
 we will need to be able to distinguish networks. 
 
 \begin{defn}
\label{defn: indistinguishable}
Two distinct $n$-leaf networks
$\mathcal{N}_1$ and
$\mathcal{N}_2$
 are \emph{distinguishable} if 
$\mathcal{V}_{\mathcal{N}_1} \cap \mathcal{V}_{\mathcal{N}_2}$ 
is a proper subvariety of $\mathcal{V}_{\mathcal{N}_1} $
and of $\mathcal{V}_{\mathcal{N}_2} $. 
Otherwise, they are \emph{indistinguishable}.
\end{defn}

Though distinguishability is phrased in terms of varieties, it will 
often be easier to work with the vanishing ideal of the network $\mathcal{N}$.
The vanishing ideal $I_{\mathcal{N}}$  is the set of polynomials that evaluate to zero everywhere on 
$\mathcal{V}_\mathcal{N}$ (or equivalently $\mathcal{M}_\mathcal{N}$). 
Ideals for $n$-leaf network models are contained in polynomial rings where the indeterminates are
 indexed by $n$-tuples of the DNA bases. 
 That is, for a fixed choice of model and an $n$-leaf network
 $\mathcal{N}$,
 $$I_{\mathcal{N}} \subseteq R_{n} := \cc[p_{i_1\ldots i_n} : (i_1, \ldots, i_n) \in \{A,C,G,T\}^n].$$
The elements of the ideal of a phylogenetic model are referred to as \emph{phylogenetic invariants},
and they have played a key role in proving identifiability results for several phylogenetic models
(e.g., \cite{Allman, Long2015a, chifmankubatko2015, Rhodes2012, Long2017}).
The following proposition shows the connection between generic identifiability 
and distinguishing networks.

\begin{op} 
\label{prop: identifiability}
The network parameter of a phylogenetic network model is generically identifiable if 
for all $n \in \mathbb{N}$, all 
pairs of $n$-leaf networks are distinguishable. 
\end{op}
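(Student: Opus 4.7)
The plan is to deduce generic identifiability from distinguishability by pulling back a phylogenetic invariant through the parameterization and then invoking the standard fact that the zero locus of a nonzero polynomial has Lebesgue measure zero. Fix any two distinct $n$-leaf networks $\mathcal{N}_1$ and $\mathcal{N}_2$. The distinguishability hypothesis immediately produces a polynomial $f \in I_{\mathcal{N}_2}$ that does not vanish identically on $\mathcal{V}_{\mathcal{N}_1}$: if every element of $I_{\mathcal{N}_2}$ vanished on $\mathcal{V}_{\mathcal{N}_1}$, then $\mathcal{V}_{\mathcal{N}_1} \subseteq \mathcal{V}_{\mathcal{N}_2}$, contradicting the fact that $\mathcal{V}_{\mathcal{N}_1} \cap \mathcal{V}_{\mathcal{N}_2}$ is a proper subvariety of $\mathcal{V}_{\mathcal{N}_1}$.

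Next, I would pull this invariant back to the parameter space of $\mathcal{N}_1$, setting $F := f \circ \psi_{\mathcal{N}_1}$. Because $\mathcal{V}_{\mathcal{N}_1}$ is by definition the Zariski closure of the image of $\psi_{\mathcal{N}_1}$ viewed over $\mathbb{C}$, the fact that $f$ is not identically zero on $\mathcal{V}_{\mathcal{N}_1}$ forces $F$ to be a nonzero polynomial on $\mathbb{C}^s$. On the other hand, any $\theta \in \Theta_{\mathcal{N}_1}$ with $\psi_{\mathcal{N}_1}(\theta) \in \mathcal{M}_{\mathcal{N}_2} \subseteq \mathcal{V}_{\mathcal{N}_2}$ automatically satisfies $F(\theta) = 0$, since $f$ vanishes on all of $\mathcal{V}_{\mathcal{N}_2}$. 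Thus the ``bad'' parameter set in Definition \ref{defn: genericallyidentifiable} is contained in $\{\theta : F(\theta) = 0\}$, a proper algebraic subvariety of $\mathbb{C}^s$, which meets $\Theta_{\mathcal{N}_1}$ in a set of Lebesgue measure zero.

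The argument is essentially formal, so no single step presents a genuine obstacle. The only point that deserves a moment of care is the transition from ``$F$ is nonzero as a complex polynomial on $\mathbb{C}^s$'' to ``$F$ is not identically zero on the real stochastic parameter space $\Theta_{\mathcal{N}_1}$''. This is automatic in the models at hand: $\Theta_{\mathcal{N}_1}$ is the product of the ranges of the numerical parameters with the interval for the mixing parameter $\delta$, and therefore contains a nonempty Euclidean-open subset of $\mathbb{R}^s$, on which a nonzero complex polynomial cannot vanish identically. Applying this argument to every unordered pair of distinct $n$-leaf networks, for every $n \in \mathbb{N}$, establishes generic identifiability of the network parameter in the sense of Definition \ref{defn: genericallyidentifiable}.
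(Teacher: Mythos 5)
Your proof is correct and follows essentially the same route as the paper's: extract an invariant of one network that does not vanish on the other's variety (guaranteed by distinguishability), pull it back through the parameterization to get a nonzero polynomial on parameter space, and conclude that the bad parameter set lies in its zero locus, which has Lebesgue measure zero. Your handling of the real-versus-complex issue at the end is, if anything, slightly more careful than the paper's.
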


\begin{proof}
Let $\mathcal{N}_1$ and $\mathcal{N}_2$ be distinguishable $n$-leaf networks. 
By definition, this means that $\mathcal{V}_{\mathcal{N}_1} \cap \mathcal{V}_{\mathcal{N}_2}$ 
is a proper subvariety of $\mathcal V_{\mathcal{N}_2} $.
This implies that there exists $f \in I_{\mathcal{N}_1}$ such that $f \not \in I_{\mathcal{N}_2}$ and so that
$$f \circ \psi_{\mathcal{N}_2}: \mathbb{C}^{s_2} \to \mathbb{C}$$
is not identically zero. 
Therefore, the set of stochastic parameters in $\Theta_{\mathcal{N}_2}$ mapping into 
$\mathcal{M}_{\mathcal{N}_1} \cap \mathcal{M}_{\mathcal{N}_2}$ is contained in the proper algebraic subvariety
$$V(f \circ \psi_{\mathcal{N}_2}) := \{\theta \in \mathbb{C}^{s_2}: (f \circ \psi_{\mathcal{N}_2})(\theta) = 0\} \subsetneq \mathbb{C}^{s_2}.$$
This implies that the set of stochastic parameters in $\Theta_{\mathcal{N}_2}$ mapping into 
$\mathcal{M}_{\mathcal{N}_1} \cap \mathcal{M}_{\mathcal{N}_2}$ must be measure zero inside of 
 $\Theta_{\mathcal{N}_2}$.
Otherwise, $V(f \circ \psi_{\mathcal{N}_2})$ would include all of $\mathbb{R}^{s_2}$, and since the real numbers are Zariski dense, it would include all of $\mathbb{C}^{s_2}$, a contradiction.
\end{proof}

\begin{rmk}
While we define generic identifiability of the network parameter to be a condition on \emph{all} pairs of network models from the class of $n$-leaf network models, we could easily modify the definition to be a condition on all pairs of network models from a \emph{subclass} of the $n$-leaf network models. Furthermore, we can modify Proposition \ref{prop: identifiability} to show the identifiability
of the network parameter in a phylogenetic network model on a subclass of network models
by showing all network models in the subclass are distinguishable. \end{rmk}

The preceding remark will be important, since, as we will see in Section \ref{sec: 4leafnetworks}, 
arbitrary Jukes-Cantor networks are not distinguishable.
We will even see that the same is true for the class of all cycle-networks, and we will have
to restrict to the subclass of large-cycle networks to find a class
of network models for which the semi-directed 
network topology is generically identifiable.

\subsection{The Fourier-Hadamard Transform}
\label{sec: fourier}

Our approach for proving many of the results in this paper
will be to use some of the computational algebra techniques outlined above. 
First, we will perform a linear change of coordinates called the 
Fourier-Hadamard transform \cite{Evans1993, Szekely1993} that
will make the parameterizations of the tree-based phylogenetic models monomial. 
This means that the cycle-network models will be parameterized by binomials, 
greatly reducing the computational time. 
Working in the transformed coordinates is common in phylogenetics, including applications involving mixture models \cite{Allman, Long2015a}. 
Since all of the computations referenced in the next section will be performed in Fourier coordinates,
we provide here a basic explanation of the Fourier parameterization. More details can be found in \cite{Evans1993, Szekely1993, Sturmfels2005}.

The Fourier-Hadamard transform applies to a particular 
class of phylogenetic models called \emph{group-based models}.

\begin{defn}
 A phylogenetic model is \emph{group-based} if there exists a group $\mathcal G$, a map $L: \{A,C,G,T\}  \rightarrow \mathcal G$, and functions $f_{e} : \mathcal G \rightarrow \rr $ associated to the edges of $\mathcal{T}$ such that, for each edge $e$ of $\mathcal{T}$,  $M^e_{ij} = f_{e}(L(j)-L(i)).$
 \end{defn}
 
Our definition here is less general than is usually given and applies only to 4-state models of DNA evolution.
However, the definition encompasses many commonly used models in phylogenetic applications including the Jukes-Cantor, the Kimura 2-parameter, and the Kimura 3-parameter models.
The transformation applies to both the parameter space and the space of probability coordinates. 
To write the new parameterization, we let $\Sigma(\mathcal{T})$ represent the set of 
edges of $\mathcal{T}$, and, for an edge $e \in \Sigma(\mathcal{T})$, we write $B_e|B_e'$ for the split of the leaf labels
induced by removing $e$ from $\mathcal{T}$. We also use the symbols $\{A,C,G,T\}$ as shorthand for the group elements
$\{L(A),L(C),L(G),L(T)\}$. The $|\Sigma(\mathcal{T})|\cdot|\mathcal G|$ transformed parameters are written as
 $a^{e}_g$, and the new $|\mathcal G|^n$ coordinates of the image space are parameterized as follows

 \begin{displaymath}
   q_{g_1 \ldots g_n}  = \left\{
     \begin{array}{cl}
       \displaystyle \prod_{{e} \in \Sigma(T)} a^{e}_{\sum_{j \in B_e} g_j} & \text{if }\displaystyle \sum_{j = 1}^{n} g_j= 0 \\
       0 & \text{otherwise.}
     \end{array}
   \right.
\end{displaymath} 

Importantly, as was noted in \cite{Allman}, the linearity of the transform means that it applies to mixtures 
of tree-based phylogenetic models as well. As we have been careful to point out, the network models
studied here are not the same as arbitrary $2$-tree mixtures. 
However, we can obtain the transformed parameterization
of a cycle-network model by identifying parameters 
in the Fourier parameterization of a $2$-tree mixture.

For the Jukes-Cantor model, the group $\mathcal G$ is $\mathbb{Z}_2 \times \mathbb{Z}_2$ and
we arbitrarily set $L(A) = 0$
and then insist that for each edge $e$ of $\mathcal{N}$, $f^e(C) = f^e(G) = f^e(T)$. 
This implies that $a^e_C = a^e_G = a^e_T$, and 
the stochastic condition
in the probability space forces 
$a^e_A =1$. 
We will ignore this last condition, which effectively homogenizes the parameterization
and allows us to work projectively.
The example below shows how to obtain the parameterization for a 4-leaf 3-cycle network.

\begin{ex} Shown below is the parameterization of a Fourier coordinate
of the 4-leaf 3-cycle network pictured in Figure \ref{fig: FourierExample}.
To simplify the notation we use $a^i_g$ for the parameters rather than $a^{e_i}_g$. The first term
of the parameterization is the parameterization of the tree induced by removing the reticulation edge
$e_5$ and the second from removing the edge $e_7$. 
\begin{align*}
q_{ACGT} &= \delta_1(a^{1}_{A}a^{2}_{C}a^{3}_{G}a^{4}_{T}a^{6}_{A + C}a^{7}_{A + C}a^{8}_{T}) + 
\delta_2(a^{1}_{A}a^{2}_{C}a^{3}_{G}a^{4}_{T}a^{5}_{A + C}a^{6}_{A + C}a^{8}_{G}) \\
&= \delta_1(a^{1}_{A}a^{2}_{C}a^{3}_{C}a^{4}_{C}a^{6}_{C}a^{7}_{C}a^{8}_{C}) + 
\delta_2(a^{1}_{A}a^{2}_{C}a^{3}_{C}a^{4}_{C}a^{5}_{C}a^{6}_{C}a^{8}_{C}). 
\end{align*}
Notice, we can reparameterize this variety by replacing
$\delta_1a^7_g$ with $a^7_g$ and $\delta_2a^5_g$ with $a^5_g$.
Thus, we can write

$$q_{ACGT} =a^{1}_{A}a^{2}_{C}a^{3}_{C}a^{4}_{C}a^{6}_{C}a^{7}_{C}a^{8}_{C} + 
a^{1}_{A}a^{2}_{C}a^{3}_{C}a^{4}_{C}a^{5}_{C}a^{6}_{C}a^{8}_{C}.$$

\end{ex}

\begin{figure}[h]
	\caption{A 4-leaf $3$-cycle network.}
	\label{fig: FourierExample} 
		\includegraphics[width=4cm]{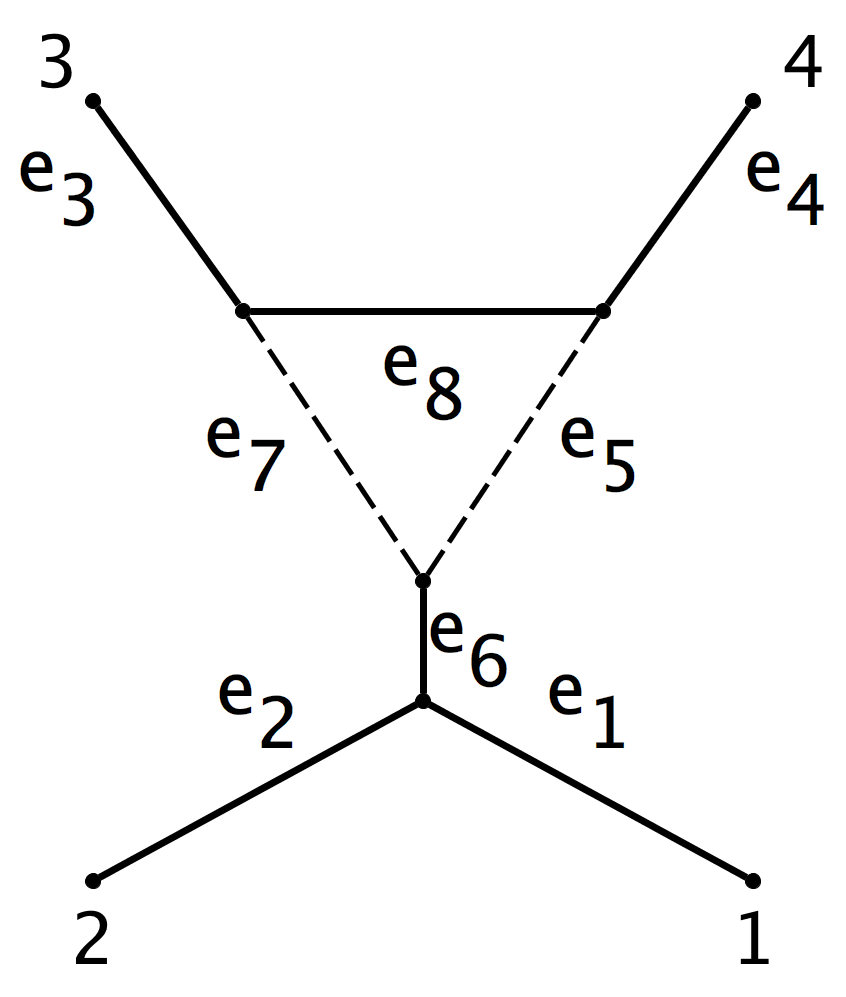}
\end{figure}

\section{Identifiability of cycle-networks}
\label{sec: results}

One of the techniques that has proven successful for establishing 
identifiability for phylogenetic mixture models is to first establish the result for mixtures
on trees with a few leaves. The idea is then to show that distinct mixtures on 
trees with many leaves can always be restricted to a subset of the leaves
on which they remain distinct. 
Our idea here is essentially the same: first prove some
identifiability results for cycle-networks with few leaves and then show
 that these imply identifiability for cycle-networks with any number of leaves. 
To begin, we introduce the concept
of restricting a phylogenetic network.

 \begin{defn}
 \label{defn: Restriction}
Let $\mathcal{N}$ be an $n$-leaf phylogenetic network with root $\rho$,
and let $A \subseteq [n]$. The \emph{restriction of $\mathcal{N}$
to $A$} is the phylogenetic network $\mathcal{N}_{|A}$ constructed by
\begin{enumerate}[(i)]
\item Taking the union of all directed paths from 
$\rho$ to a leaf labeled by an element of $A$.
\item Deleting all vertices that lie above the last such vertex on all 
paths.
\item Suppressing all degree two vertices other than the root.
\item Removing all parallel edges.
\item Applying steps (iii) and (iv) until the network is a phylogenetic network.
\end{enumerate}
\end{defn}

The network constructed after step (i) of Definition \ref{defn: Restriction}
defines a network model which is the same as the phylogenetic
network model $\mathcal{M}_{\mathcal{N}_{|A}}$. 
To see this, notice that it does not alter the model
to delete pendant edges above the vertex 
described in (ii) and reroot at this vertex.
This is because given any previous root distribution $\boldsymbol{\pi}$
and pendant edge $e$, we can simply remove $e$ and 
choose the new root distribution in the model
to be $\boldsymbol{\pi}M^e$. Likewise, any non-root degree two 
vertex has two edges incident to it. Those edges can be replaced
by a single edge with transition matrix that is the product of the 
transition matrices of the two incident edges. If either of the 
incident edges was a reticulation edge, then the new edge is
also a reticulation edge and keeps the same reticulation edge parameter.

Finally, if there are two parallel edges $e^i_0$ and $e^i_1$, they must
be reticulation edges  
and can be replaced by a single edge
with transition matrix $\delta^i M^{e^i_0} + (1 - \delta_i) M^{e^i_1}$.
Thus, it is clear that the phylogenetic network model on
the network constructed after step (i) of Definition \ref{defn: Restriction}
is contained in $\mathcal{M}_{\mathcal{N}_{|A}}$. 
The other containment is easily realized by setting some
of the transition matrices to the identity in the 
network constructed after step (i) of Definition \ref{defn: Restriction}.
The utility of the restriction operation comes from the following proposition. 

\begin{op} 
\label{prop: restricting}
Let $\mathcal{N}$ be an $n$-leaf phylogenetic network and $A \subseteq [n]$. Then
$\mathcal{M}_{\mathcal{N}_{|A}}$ is the image of the model 
$\mathcal{M}_{\mathcal{N}}$ under the 
marginalization map
$\mu_A: \mathbb{C}^{4^n} \to \mathbb{C}^{4^{|A|}}$
defined by marginalizing over all the states of the
 leaves labeled by elements of $[n] \setminus A$.
\end{op}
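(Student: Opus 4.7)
The plan is to bootstrap from the tree case using linearity of marginalization, then invoke the argument already given in the paragraph preceding the proposition, which identifies $\mathcal{M}_{\mathcal{N}_{|A}}$ with the network model on the intermediate object produced by step (i) of Definition \ref{defn: Restriction}. In outline: marginalization is linear, so it commutes with the mixture decomposition $P_{\mathcal{N},\theta} = \sum_\sigma w_\sigma(\delta) P_{\mathcal{T}_\sigma,\theta_\sigma}$; for each embedded tree $\mathcal{T}_\sigma$, marginalizing over leaves outside $A$ yields a tree distribution on the restriction $(\mathcal{T}_\sigma)_{|A}$ with parameters inherited from $\theta$; reassembling the mixture produces a point in the post-step-(i) network model, hence in $\mathcal{M}_{\mathcal{N}_{|A}}$.

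First I would handle the tree case directly from the monomial expression $P_{\mathcal{T},\theta}(\phi) = \pi_{\phi(\rho)} \prod_{e=uv} M^e_{\phi(u),\phi(v)}$. Pushing the extra marginalizing sums inward, along any pendant edge $e = uv$ with $v$ a leaf outside $A$ the identity $\sum_j M^e_{ij} = 1$ removes both $e$ and its leaf; each resulting degree-two internal vertex is then suppressed by combining its two incident edges into one whose transition matrix is their product. These are exactly the reductions already justified in the paragraph above the proposition, and iterating them converts $\mathcal{T}$ into $\mathcal{T}_{|A}$ with inherited parameters, yielding $\mu_A(\mathcal{M}_\mathcal{T}) \subseteq \mathcal{M}_{\mathcal{T}_{|A}}$. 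The reverse containment comes from lifting: given parameters on $\mathcal{T}_{|A}$, place identity transition matrices on the suppressed pendant edges of $\mathcal{T}$ and arbitrary root distributions on the pruned subtrees.

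Assembling the pieces, linearity of $\mu_A$ together with the tree case gives $\mu_A(\mathcal{M}_\mathcal{N}) \subseteq \mathcal{M}_{\mathcal{N}_{|A}}$, and the reverse inclusion is obtained by the analogous lifting, placing identity transition matrices on every edge removed or suppressed in passing from $\mathcal{N}$ to $\mathcal{N}_{|A}$ and arbitrary parameters on the pruned pendant subtrees. The main obstacle is the parallel-edge case of step (iv) of Definition \ref{defn: Restriction}: when step (i) retains both incoming edges of a reticulation $w_i$ but removes all intermediate vertices separating them, the two parallel reticulation edges must be merged. The paragraph preceding the proposition identifies the combined transition matrix as $\delta_i M^{e^i_0} + (1-\delta_i) M^{e^i_1}$, which is precisely what the mixture-over-$\sigma$ decomposition produces after the $\sigma$-dependent marginalization collapses the intervening structure. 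Checking that this $\delta_i$-bookkeeping aligns on the nose — so that the two orders of operations (marginalize in $\mathcal{N}$, then restrict vs.\ restrict, then evaluate) produce the same polynomial — is the only subtle point in the argument, and once it is verified the two containments together prove the proposition.
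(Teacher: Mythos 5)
Your argument is correct and is essentially the proof the paper gives: both directions reduce to the observation that marginalizing over a leaf outside $A$ eliminates its pendant edge because transition-matrix rows sum to one, followed by the reductions (suppressing degree-two vertices via matrix products, merging parallel reticulation edges into $\delta_i M^{e^i_0} + (1-\delta_i)M^{e^i_1}$) already justified in the paragraph preceding the proposition, with the reverse containment obtained exactly as you describe by matching parameters on shared edges and padding with identity matrices. The only cosmetic difference is that you route the forward inclusion explicitly through the tree-mixture decomposition and linearity of $\mu_A$, whereas the paper marginalizes the network's joint distribution over non-leaf vertices and non-$A$ leaves directly.
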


\begin{proof}
Just as described for a tree in Section \ref{sec: tree models}, given an assignment of states to the 
vertices of $\mathcal{N}$, we can compute the probability of observing this state using the
root distribution, the transition matrices, and the reticulation edge parameters, $\delta_i$. 
Let $\theta$ be a choice of parameters for the model on $\mathcal{N}$.
The distribution 
$P_{\mathcal{N}, \theta} \in \mathcal{M}_\mathcal{N}$ 
can then be computed by marginalizing over all the states
of the non-leaf vertices. The network constructed after
step (i) of Definition \ref{defn: Restriction} defines a distribution which is computed
by further marginalizing over all states of the leaves not labeled by elements of $A$.
As we argued above, this distribution is contained in $\mathcal{M}_{\mathcal{N}_{|A}}$ and
is precisely the image of $P_{\mathcal{N}, \theta}$ under $\mu_A$. Therefore 
$\text{Im}(\mu_A) \subseteq \mathcal{M}_{\mathcal{N}_{|A}}.$

Choosing any parameters $\theta_A$ yielding a distribution
 $P_{\mathcal{N}_{|A},\theta_A} \in \mathcal{M}_{\mathcal{N}_{|A}}$, 
we can choose matching parameters for the edges shared by 
$\mathcal{N}$
and 
$\mathcal{N}_{|A}$ 
and extend this with any choice of parameters 
$\theta'_A$ to the
rest of $\mathcal{N}$. 
Then 
$\mu_A(P_{\mathcal{N}, \theta'_A}) = P_{\mathcal{N}_{|A},\theta_A},$
which implies that  
$\text{Im}(\mu_A) = \mathcal{M}_{\mathcal{N}_{|A}}.$
\end{proof}

In Section \ref{sec: algstats} we argued that to prove identifiability for a class of $n$-leaf
networks, it is enough to show that any two networks in the class are distinguishable. 
One of the nice applications of Proposition \ref{prop: restricting} is the following result.

\begin{op}
\label{prop: restriction distinguishable}
Let $\mathcal{N}_1$ and $\mathcal{N}_2$ be distinct $n$-leaf networks and let $A \subseteq [n]$.
If $\mathcal{N}_{1|A}$ and $\mathcal{N}_{2|A}$ are distinguishable, then so are
$\mathcal{N}_1$ and $\mathcal{N}_2$.
\end{op}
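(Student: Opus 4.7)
The plan is to prove the contrapositive: assume $\mathcal{N}_1$ and $\mathcal{N}_2$ are indistinguishable and derive that $\mathcal{N}_{1|A}$ and $\mathcal{N}_{2|A}$ are also indistinguishable. By Definition \ref{defn: indistinguishable}, indistinguishability of the $\mathcal{N}_i$ means that $\mathcal{V}_{\mathcal{N}_1} \cap \mathcal{V}_{\mathcal{N}_2}$ equals $\mathcal{V}_{\mathcal{N}_1}$ or $\mathcal{V}_{\mathcal{N}_2}$, so (after relabeling) we may assume $\mathcal{V}_{\mathcal{N}_1} \subseteq \mathcal{V}_{\mathcal{N}_2}$. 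The goal is then to transfer this containment through the marginalization map to obtain $\mathcal{V}_{\mathcal{N}_{1|A}} \subseteq \mathcal{V}_{\mathcal{N}_{2|A}}$.

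The first step is to upgrade Proposition \ref{prop: restricting} from the models $\mathcal{M}_{\mathcal{N}}$ to their Zariski closures $\mathcal{V}_{\mathcal{N}}$, i.e., to show that
$$\mathcal{V}_{\mathcal{N}_{|A}} \;=\; \overline{\mu_A(\mathcal{V}_{\mathcal{N}})}.$$
Since $\mu_A$ is a linear (hence polynomial) map, it is continuous in the Zariski topology, so $\mu_A(\overline{\mathcal{M}_{\mathcal{N}}}) \subseteq \overline{\mu_A(\mathcal{M}_{\mathcal{N}})}$. Taking closures and combining with the obvious reverse inclusion $\overline{\mu_A(\mathcal{M}_{\mathcal{N}})} \subseteq \overline{\mu_A(\mathcal{V}_{\mathcal{N}})}$, together with Proposition \ref{prop: restricting}, yields the displayed equality.

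Given this, the second step is immediate: from $\mathcal{V}_{\mathcal{N}_1} \subseteq \mathcal{V}_{\mathcal{N}_2}$ we obtain $\mu_A(\mathcal{V}_{\mathcal{N}_1}) \subseteq \mu_A(\mathcal{V}_{\mathcal{N}_2})$, and taking Zariski closures gives
$$\mathcal{V}_{\mathcal{N}_{1|A}} \;=\; \overline{\mu_A(\mathcal{V}_{\mathcal{N}_1})} \;\subseteq\; \overline{\mu_A(\mathcal{V}_{\mathcal{N}_2})} \;=\; \mathcal{V}_{\mathcal{N}_{2|A}}.$$
Thus $\mathcal{V}_{\mathcal{N}_{1|A}} \cap \mathcal{V}_{\mathcal{N}_{2|A}} = \mathcal{V}_{\mathcal{N}_{1|A}}$, so the restrictions fail the distinguishability condition of Definition \ref{defn: indistinguishable}, contradicting the hypothesis.

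I expect no serious obstacles here; the only technical point requiring care is the interchange of marginalization with Zariski closure in the first step, and this is handled cleanly by the fact that $\mu_A$ is linear. A minor bookkeeping issue is that Definition \ref{defn: indistinguishable} presumes the two networks are distinct, but this is automatic: the hypothesis that $\mathcal{N}_{1|A}$ and $\mathcal{N}_{2|A}$ are distinguishable (in particular, distinct) forces $\mathcal{N}_1 \neq \mathcal{N}_2$, so the argument applies.
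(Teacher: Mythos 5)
Your proof is correct. It takes the dual route to the paper's: the paper argues directly, choosing a polynomial $f_2 \in R_{|A|}$ that vanishes on $\mathcal{M}_{\mathcal{N}_{2|A}}$ but not on $\mathcal{M}_{\mathcal{N}_{1|A}}$ and pulling it back along the ring homomorphism $\phi_A: R_{|A|} \to R_n$ corresponding to $\mu_A$, so that $\phi_A(f_2) = f_2 \circ \mu_A$ separates $\mathcal{V}_{\mathcal{N}_1}$ from $\mathcal{V}_{\mathcal{N}_2}$; you instead prove the contrapositive geometrically, pushing the containment $\mathcal{V}_{\mathcal{N}_1} \subseteq \mathcal{V}_{\mathcal{N}_2}$ forward under $\mu_A$ and taking Zariski closures. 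Both arguments rest entirely on Proposition \ref{prop: restricting}, so they are really the algebraic and geometric faces of the same fact. The paper's pullback-of-invariants version is slightly leaner because it never needs to interchange images with closures: a polynomial vanishing on $\mathcal{M}_{\mathcal{N}_2}$ automatically vanishes on its closure $\mathcal{V}_{\mathcal{N}_2}$, and no further topology is required. Your version does need the identity $\mathcal{V}_{\mathcal{N}_{|A}} = \overline{\mu_A(\mathcal{V}_{\mathcal{N}})}$, but you justify it correctly via Zariski continuity of the linear map $\mu_A$, and this identity is a mild strengthening of Proposition \ref{prop: restricting} that is worth having on record in its own right. Your closing remark about distinctness is also handled appropriately: distinguishability of the restrictions presupposes they are distinct, and your containment argument shows failure of the distinguishability condition for the restrictions whether or not they coincide as networks.
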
 

\begin{proof}
Suppose $\mathcal{N}_{1|A}$ and $\mathcal{N}_{2|A}$ are distinguishable. 
Without loss of generality, we will show
$\mathcal{V}_{\mathcal{N}_1} \cap \mathcal{V}_{\mathcal{N}_2}$ is a proper
subvariety of $\mathcal{V}_{\mathcal{N}_1}$. 
By definition of distinguishable, we have
$\mathcal V_{\mathcal{N}_{1|A}} \cap \mathcal V_{\mathcal{N}_{2|A}} \subsetneq \mathcal V_{\mathcal{N}_{1|A}}$.
Therefore, there exists $f_2 \in R_{|A|}$ that vanishes on $\mathcal{M}_{\mathcal{N}_{2|A}}$
but not on $\mathcal{M}_{\mathcal{N}_{1|A}}$. Letting $\phi_A: R_{|A|} \to R_n $  be the
ring homomorphism corresponding to $\mu_A$, then the polynomial $\phi_A(f_2) \in R_n$ vanishes
on $\mathcal{M}_{\mathcal{N}_{2}}$ but not on $\mathcal{M}_{\mathcal{N}_{1}}$.
Therefore, 
$\mathcal{V}_{\mathcal{N}_1} \cap \mathcal{V}_{\mathcal{N}_2}$ is a proper
subvariety of $\mathcal{V}_{\mathcal{N}_1}$.
\end{proof}

It is possible that after restricting, or indeed, even after unrooting,
that a cycle-network becomes a 2-cycle network. Such a
network must necessarily have parallel edges, which, as
pointed out in the discussion preceding Proposition \ref{prop: restricting},
can be suppressed without altering the network model.
This implies immediately that the models for 2-cycle networks 
are phylogenetic tree models. 
This suggests that perhaps we should exclude 2-cycle networks
entirely from the class of networks considered 
in order to preserve a statement about the generic identifiability of the network
topology. However, even this will not be enough.

\begin{op}
\label{prop: notidentifiable}
For the CFN, JC, K2P, and K3P models, the ideal for any 3-leaf 3-cycle network is the zero ideal.
\end{op}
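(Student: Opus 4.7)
The plan is to work in Fourier coordinates and show that in each of the four models the Fourier parameterization of the 3-leaf 3-cycle network is dominant onto the appropriate reduced ambient space; dominance then forces the vanishing ideal to be trivial. The key structural observation is that a 3-leaf 3-cycle network has one leaf per cycle vertex, so deleting either reticulation edge and suppressing the resulting degree-2 vertices yields the unique unrooted 3-leaf star tree. Hence $\mathcal{M}_{\mathcal N}$ is a convex combination of two Markov models on the 3-leaf star tree, with shared tree edges and with the reticulation-edge parameters inherited separately by the two embedded trees.

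Passing to Fourier coordinates, each coordinate of $\mathcal{M}_{\mathcal N}$ becomes a sum of two monomials, one per embedded star tree. Since each reticulation edge appears in exactly one of these two trees, the mixing parameter $\delta$ can be absorbed into a reticulation-edge Fourier parameter, in the same fashion as illustrated at the end of Section \ref{sec: fourier}, producing an unmixed polynomial parameterization in the numerical parameters.

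To prove dominance in each model, I would exhibit an explicit parameter point at which the Jacobian of the parameterization has rank equal to the dimension of the appropriate reduced Fourier ambient space. For CFN this is essentially automatic, since even a single 3-leaf star tree already parameterizes its Fourier ambient densely, so a fortiori the mixture does. For JC the reduced parameterization has six numerical parameters mapping into the $4$-dimensional reduced Fourier ambient (one coordinate per support of the DNA state pattern), and a direct calculation shows that a $4 \times 4$ minor of the Jacobian has non-zero determinant at the explicit parameter choice $(x_1, x_2, x_3, x_4, y_5, y_6) = (1,1,1,2,1,2)$. For K2P and K3P the same strategy applies: in each case the number of free numerical parameters exceeds the dimension of the reduced Fourier ambient, and the Jacobian rank at a generic point can be verified to be maximal.

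The main obstacle is that the Fourier ambient dimension and the number of per-edge Fourier parameters differ among the four models, so each case is a separate calculation. With only three leaves, however, the computations remain tractable: they can be carried out by hand for CFN and JC, and via a symbolic algebra system for K2P and K3P.
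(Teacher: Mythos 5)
Your proposal is correct and takes essentially the same route as the paper: the paper's proof is simply a computational verification (in Fourier coordinates, modulo the linear invariants common to all networks for the given group-based model) that the ideal is trivial for each of CFN, JC, K2P, and K3P, and your dominance argument via a full-rank Jacobian is a valid way to carry out and certify exactly that computation. Indeed, a maximal-rank Jacobian at one point shows the Zariski closure of the image is the entire reduced ambient space, which is precisely the statement that the vanishing ideal is zero, so no elimination step is needed.
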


\begin{proof} 
Up to relabeling, there is only one 3-leaf 3-cycle network topology 
(Figure \ref{fig: 3LeafTopologies}), and by computation, we can verify the ideal of this network is trivial for CFN, JC, K2P, and K3P. 
\end{proof}

\begin{rmk}  For the computations in Proposition \ref{prop: notidentifiable} and all other computations referenced in this paper, we work modulo the set of linear invariants that hold for \emph{every} $n$-leaf network for the group-based model specified.  All computations are performed in Macaulay2 \cite{M2} and are available in the supplementary materials available on the authors' websites.
\end{rmk}

\begin{figure}[h]
	  \caption{The single non-tree 3-leaf network topology.}
	\label{fig: 3LeafTopologies} 
	\begin{center}
		\includegraphics[width=3cm]{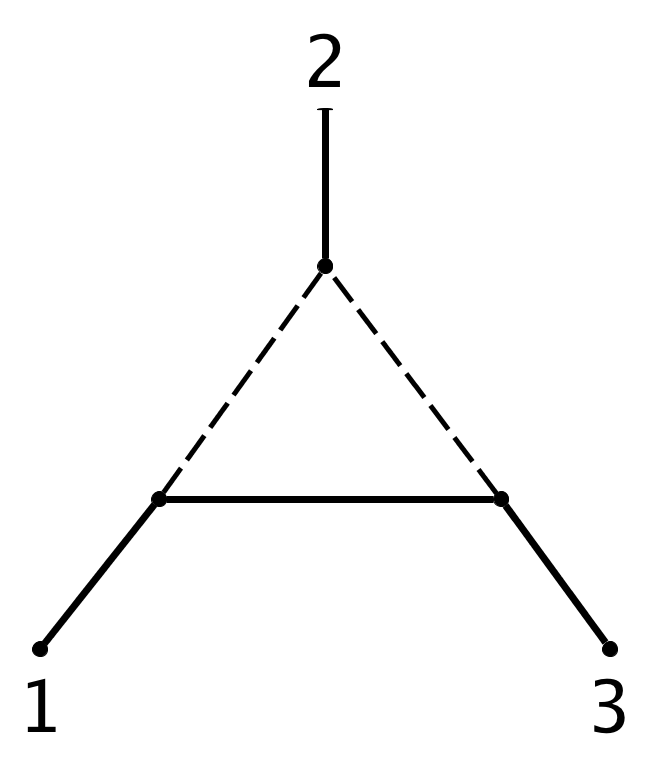}
		\end{center}
\end{figure}

Therefore, in order to find a class of network models for which we can establish indentifiability results, we need to start by considering at least 4-leaf networks.
For the 3-leaf 3-cycle network, if we restrict the parameter space by setting each of the leaf transition matrices to the identity matrix, the corresponding variety still fills the ambient space.
This fact will later prove important when investigating which
4-leaf network ideals are contained in one another.

\subsection{Distinguishing 4-leaf Networks}
\label{sec: 4leafnetworks}

After unrooting and removing parallel edges, there are,
up to relabeling, only four semi-directed 4-leaf cycle-network topologies. 
One of these is the 4-leaf unrooted tree itself, the other
three are pictured in Figure \ref{fig: 4LeafTopologies}.

\begin{figure}[h]
	\caption{The three non-tree 4-leaf cycle-network topologies.}
	\label{fig: 4LeafTopologies} 
	\begin{center}
		\includegraphics[width=10cm]{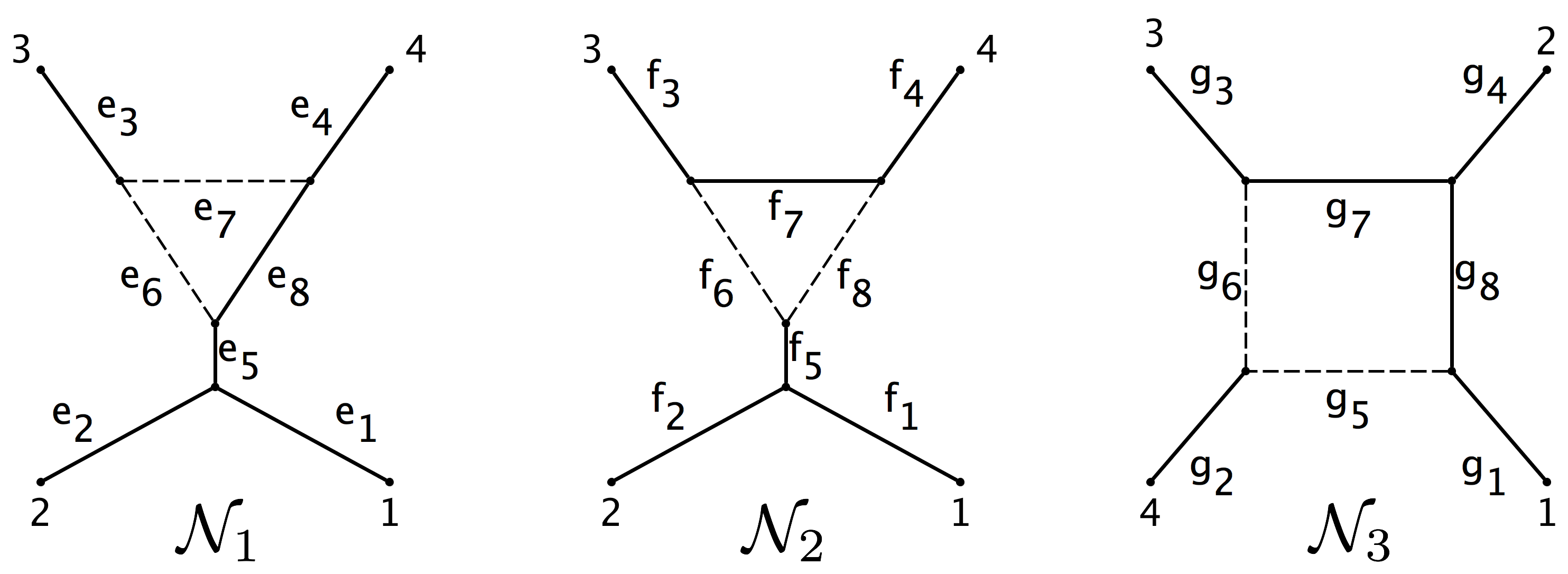}
		\end{center}
\end{figure}

Therefore, up to an action of $S_4$ on the leaf labels, there are at 
most four different 4-leaf cycle-network ideals. In fact, there are
exactly three for the Jukes-Cantor model.

\begin{op} 
\label{prop: sameskeleton}
The Jukes-Cantor network ideals for the two 4-leaf $3$-cycle networks labeled as in Figure \ref{fig: 4LeafTopologies} are equal. 
\end{op}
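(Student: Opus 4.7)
The plan is to verify the equality of the two ideals directly at the level of Fourier parameterizations. First I would apply the Fourier--Hadamard transform of Section \ref{sec: fourier} to both networks; after absorbing the $\delta$'s into the two reticulation-edge parameters as in the worked example preceding this proposition, each Fourier coordinate $q_{g_1g_2g_3g_4}$ becomes a sum of two monomials, one per embedded tree. The central observation is that both $4$-leaf $3$-cycle networks in Figure \ref{fig: 4LeafTopologies}---the one whose cherry attaches at the reticulation vertex and the one whose cherry attaches at a non-reticulation cycle vertex---induce exactly the same pair of unrooted embedded quartet trees (two copies of the quartet $12|34$ under the labelings of the figure). Both models therefore sit inside the $2$-tree mixture variety on that quartet and differ only in the algebraic relations forced among the split parameters of the two trees.

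Next I would factor out of each binomial the common monomial in the five leaf-edge Fourier parameters, leaving a ``bracket'' contribution from the three cycle-edge parameters. Under the Jukes-Cantor identification $a^e_C = a^e_G = a^e_T$ each bracket depends only on the pattern of vanishing of $g_3$, $g_4$, and $g_1+g_2$, producing exactly five distinct bracket expressions per network. In each network the bracket involves the same three cycle parameters: two reticulation-edge values (each with an independent ``$0$'' and ``$\star$'' value after $\delta$-absorption) and one cycle-tree-edge ``$\star$'' value. I would then exhibit a (possibly rational) reparameterization of the three cycle parameters of one network in terms of those of the other that matches all five brackets; together with the direct identification of the outside leaf-edge parameters, this gives $\mathcal V_{\mathcal{N}_1} = \mathcal V_{\mathcal{N}_2}$ and hence $I_{\mathcal{N}_1} = I_{\mathcal{N}_2}$.

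The main obstacle I expect is the asymmetric way the cycle's tree edge and its reticulation edges enter the two brackets: in the cherry-at-reticulation network the cycle-tree parameter couples to the leaf-$3$ and leaf-$4$ split indices, whereas in the cherry-at-non-reticulation network it couples to the split-$\{1,2\}$ index. Because the needed reparameterization is not polynomial, tracking it by hand requires a careful case analysis over the five bracket patterns. As an alternative, and following the authors' practice recorded in the remark after Proposition \ref{prop: notidentifiable}, the equality of the two ideals can be confirmed by a direct elimination in Macaulay2, carried out modulo the linear invariants that reduce the $4^4$ ambient Fourier coordinates to the finitely many distinct Jukes-Cantor patterns on a $4$-leaf network.
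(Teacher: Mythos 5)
Your setup---applying the Fourier--Hadamard transform, absorbing the $\delta$'s into the reticulation-edge parameters, factoring each coordinate into a common monomial times a cycle-edge ``bracket,'' and counting five distinct bracket patterns under the Jukes-Cantor identifications---matches the paper's computation. The gap is in the step that actually proves the proposition: you propose to ``exhibit a (possibly rational) reparameterization of the three cycle parameters of one network in terms of those of the other that matches all five brackets,'' but you do not produce it, and you yourself flag the asymmetric coupling of the cycle-tree edge as an obstacle you have not resolved. As written, the existence of such a reparameterization is exactly the content of the proposition; nothing in your argument rules out the possibility that the bracket values attainable by $\mathcal{N}_1$ sweep out only a proper subvariety of those attainable by $\mathcal{N}_2$ (or conversely), which is precisely the failure mode this paper exhibits elsewhere (Figure \ref{fig: 2IndistinguishableNetworks}, Corollary \ref{cor: cyclecontainment}).

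The paper closes this gap with no explicit reparameterization: the bracketed factor in the coordinate $q_{i_1i_2i_3i_4}$ of either network is literally the Fourier parameterization of the coordinate $q_{i_3 i_4 (i_1+i_2)}$ of the $3$-leaf $3$-cycle network obtained by pruning the $1$--$2$ cherry, and Proposition \ref{prop: notidentifiable} states that this network's ideal is the zero ideal, i.e.\ its parameterization dominates the ambient space. Hence for a generic choice of one network's parameters the resulting tuple of bracket values lies in the dense image of the other network's bracket map, the leaf parameters $a^1_g, a^2_g$ are matched directly, and equality of the Zariski closures follows. If you want to rescue your argument, this is the missing ingredient: identify the bracket as a $3$-leaf $3$-cycle parameterization and invoke the triviality of that ideal, rather than hunting for closed-form formulas. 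Your fallback of a direct Macaulay2 elimination would also establish the statement and is consistent with the paper's computational practice, but it replaces the structural argument with a brute-force check, and it is the structural version that the paper reuses later (Example \ref{ex: collapse edge} and the toric fiber product remark).
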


\begin{proof}
The parameterization for the variety of  $\mathcal{N}_1$ in the Fourier coordinates
is given by 
\begin{align*}
q_{i_1i_2i_3i_4} &= a^1_{i_1}a^2_{i_2}a^3_{i_3}a^4_{i_4}a^5_{i_1 + i_2}a^6_{i_3}a^8_{i_4} + 
a^1_{i_1}a^2_{i_2}a^3_{i_3}a^4_{i_4}a^5_{i_1 + i_2}a^7_{i_3}a^8_{i_1 + i_2} \\
&= a^1_{i_1}a^2_{i_2}(a^3_{i_3}a^4_{i_4}a^5_{i_1 + i_2}a^6_{i_3}a^8_{i_4} + 
a^3_{i_3}a^4_{i_4}a^5_{i_1 + i_2}a^7_{i_3}a^8_{i_1 + i_2}) \\
\end{align*}
for each
$(i_1,i_2,i_3,i_4) \in (\mathbb{Z}_2 \times \mathbb{Z}_2)^4$.
The term in parentheses is exactly the parameterization of
the Fourier coordinate $q_{i_3i_4(i_1 + i_2)}$ for the variety of 
the 3-leaf 3-cycle network
we obtain by pruning off the leaves $1$ and $2$ from $\mathcal{N}_1$. 
Likewise, letting
$b^i_{g}$ be the Fourier parameters for $\mathcal{N}_2$, we have 
$$q_{i_1i_2i_3i_4} = 
b^1_{i_1}b^2_{i_2}(b^3_{i_3}b^4_{i_4}b^5_{i_1 + i_2}b^6_{i_1 + i_2}b^7_{i_4} + 
b^3_{i_3}b^4_{i_4}b^5_{i_1 + i_2}b^7_{i_3}b^8_{i_1 + i_2}).$$
Again, the term in parentheses is the parameterization of
$q_{i_3i_4(i_1 + i_2)}$ for the variety of  
the 3-leaf 3-cycle network we obtain by pruning the leaves 
$1$ and $2$ from $\mathcal{N}_2$. 

Without loss of generality, specify the $a^i_g$ to 
obtain a point in $\mathcal{V}_{\mathcal{N}_1}$.
Since there are no invariants for any 3-leaf 3-cycle network, for a generic
choice of parameters, we can choose the $b^i_g$ for $3 \leq i \leq 8$ so that 
$$
(a^3_{i_3}a^4_{i_4}a^5_{i_1 + i_2}a^6_{i_3}a^8_{i_4} + 
a^3_{i_3}a^4_{i_4}a^5_{i_1 + i_2}a^7_{i_3}a^8_{i_1 + i_2}) = 
(b^3_{i_3}b^4_{i_4}b^5_{i_1 + i_2}b^6_{i_1 + i_2}b^7_{i_4} + 
b^3_{i_3}b^4_{i_4}b^5_{i_1 + i_2}b^7_{i_3}b^8_{i_1 + i_2})$$
for all
$(i_1,i_2,i_3,i_4) \in (\mathbb{Z}_2 \times \mathbb{Z}_2)^4$.
Further choosing $b^1_g = a^1_g$ and $b^2_g = a^2_g$
shows that this point is also in 
$\mathcal{V}_{\mathcal{N}_2}$.
Since a generic choice of parameters for $\mathcal{N}_2$ must also 
map into $\mathcal{V}_{\mathcal{N}_1}$, it must be that
$\mathcal{V}_{\mathcal{N}_1} = \mathcal{V}_{\mathcal{N}_2}$.
\end{proof}

\begin{rmk}
Proposition \ref{prop: sameskeleton} can be proven more succinctly using
the \emph{toric fiber product}. 
The toric fiber product \cite{Sullivant2007} is a procedure that takes two 
homogeneous ideals (not necessarily toric) in rings with a compatible grading 
and produces a new homogeneous ideal.
For phylogenetic tree models, it has been shown that the toric fiber product can be used to 
construct the ideal associated to a phylogenetic tree by ``gluing" together the 
ideals associated to claw trees.
Though we do not develop the full machinery here, the details for cycle-networks
closely parallel the situation described for trees in \cite[Section 3.4]{Sullivant2007}. 
In Proposition \ref{prop: sameskeleton}, $\mathcal{N}_1$ and $\mathcal{N}_2$ can both be 
constructed by gluing a 3-leaf claw tree to the 3-sunlet along the edges $e_5$ and $f_5$
respectively. The proof of Proposition \ref{prop: sameskeleton} 
then follows immediately since both network ideals
are equal to the toric fiber product of the zero ideal 
and the ideal for the 3-leaf claw tree.
\end{rmk}

There are $4! = 24$ ways to label each of the three $4$-leaf cycle-network topologies.
However, many of these labelings result in the same ideal. 
For example, swapping the labels $1$ and $3$ in the $4$-cycle network in 
Figure \ref{fig: 4LeafTopologies} does not change the network.
The following proposition classifies all the ideals associated to 
4-leaf Jukes-Cantor cycle-networks.

\begin{op}
\label{prop: AllJCideals}
For 4-leaf Jukes-Cantor cycle-networks, there are 
\begin{itemize}
\item 3 ideals corresponding to 2-cycle networks (trees) that are 6-dimensional.
\item 6 ideals corresponding to 3-cycle networks that are 7-dimensional.
\item 12 ideals corresponding to 4-cycle networks that are 8-dimensional.
\end{itemize} 
\end{op}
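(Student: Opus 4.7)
The plan is to proceed case by case in the cycle size $k \in \{2,3,4\}$, enumerate the 4-leaf $k$-cycle networks up to the natural symmetries and to the identifications of Proposition~\ref{prop: sameskeleton}, and then verify distinctness and dimension of the resulting Jukes--Cantor ideals by a Macaulay2 computation in the Fourier coordinates of Section~\ref{sec: fourier}.

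For the enumeration: a 2-cycle network collapses (by suppressing parallel edges) to an unrooted binary tree, and there are three such trees on $[4]$, namely the splits $12|34$, $13|24$, $14|23$. A 4-leaf 3-cycle network is specified by the data $(A_1, \{A_2, A_3\})$, where $A_1 \sqcup A_2 \sqcup A_3 = [4]$ is a partition into nonempty parts and $\{A_2, A_3\}$ is unordered (accounting for the reflection symmetry that swaps the two non-reticulation cycle vertices). Splitting on $|A_1|$ gives $\binom{4}{2} = 6$ networks with $|A_1| = 2$ and $4 \cdot \binom{3}{2} = 12$ networks with $|A_1| = 1$, for $18$ labeled 3-cycle networks in total. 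Proposition~\ref{prop: sameskeleton} asserts that the ideal depends only on the cherry pair, not on which cycle vertex hosts the cherry, and since each of the $\binom{4}{2} = 6$ cherry pairs indexes exactly three of these 18 labelings, the list collapses to $6$ equivalence classes. For 4-cycle networks every cycle vertex carries exactly one leaf, and modding out the cycle's reflection swapping $v_2$ and $v_4$ leaves $4!/2 = 12$ labelings, which we claim correspond to 12 distinct ideals.

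With a representative chosen from each equivalence class, I would write the Jukes--Cantor parameterization in Fourier coordinates and compute its vanishing ideal in Macaulay2, working modulo the linear invariants common to every 4-leaf network (as in the remark after Proposition~\ref{prop: notidentifiable}). The stated dimensions $6, 7, 8$ are then read off directly as the Krull dimensions of the corresponding quotient rings.

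The main obstacle is verifying distinctness within each class: confirming that no two of the $6$ chosen 3-cycle representatives share an ideal ($\binom{6}{2} = 15$ pairs) and no two of the $12$ chosen 4-cycle representatives share an ideal ($\binom{12}{2} = 66$ pairs). For each such pair I would exhibit a polynomial lying in one ideal that does not vanish on a generic parameterization of the other variety; this is conceptually straightforward but combinatorially sizable, which is why the final verification is delegated to the Macaulay2 scripts cited in the remark after Proposition~\ref{prop: notidentifiable}.
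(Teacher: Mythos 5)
Your proposal takes essentially the same route as the paper: the paper likewise counts the ideals by enumerating labeled semi-directed topologies up to symmetry and the identifications supplied by Proposition~\ref{prop: sameskeleton}, and it establishes the dimensions and pairwise distinctness by Macaulay2 computation in Fourier coordinates (its specific certificate being low-degree elements of each ideal, a Jacobian-rank lower bound on the dimension, and a check that those elements generate a prime ideal of that dimension). Your explicit enumeration (3 trees; 18 labeled 3-cycle networks collapsing to 6 cherry classes; $4!/2=12$ labeled 4-cycle networks) is consistent with, and somewhat more detailed than, what the paper records, so no gap to report.
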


\noindent Notice that this situation is in sharp
contrast to the case of Jukes-Cantor mixture models,
and, indeed, all other group-based mixture models, 
where equality between the
number of leaves implies equality between the ideal dimensions 
\cite{Allman, Long2015a, MRC2Paper}.

The dimension results in Proposition \ref{prop: AllJCideals} are obtained by computing 
the ideals in Macaulay2 \cite{M2}; the computations 
are available in the supplementary materials.
Our approach for each ideal is to first obtain 
a set of elements in the ideals by computing 
the ideal only up to a certain degree.
We then use the rank of the Jacobian matrix to construct
a lower bound on the dimension of the ideal. 
Finally, we verify that the elements found in low
degree generate a prime ideal of the correct dimension, 
and hence, form a generating set.

We include at this point a classification of $2, 3,$ and $4$-cycle network ideals for the CFN model. 
This proposition suggests that it may be difficult or impossible to obtain 
strong generic identifiability results for CFN networks and provides motivation 
for beginning with the Jukes-Cantor model. 

\begin{op}
\label{prop: CFN}
For 4-leaf CFN cycle-networks, there are 
\begin{itemize}
\item 3 ideals corresponding to 2-cycle (trees) and 3-cycle networks that are 6-dimensional.
\item 3 ideals corresponding to 4-cycle networks that are 7-dimensional.
\end{itemize} 
\end{op}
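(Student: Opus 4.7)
The plan is to mirror the structure of the proof of Proposition \ref{prop: AllJCideals}, now for the CFN model. First, I would enumerate the $4$-leaf cycle-network topologies together with their labelings. Up to $S_4$-relabeling there are four topology classes (the tree, the two $3$-cycle shapes of Figure \ref{fig: 4LeafTopologies}, and the $4$-cycle), and for each I would identify the orbits of labelings that could yield distinct Fourier parameterizations. Since CFN uses the group $\mathcal G = \mathbb{Z}_2$, each edge contributes a single nontrivial Fourier parameter $a^e_1$ (with $a^e_0 = 1$ after homogenization), and the only nonzero coordinates are the eight $q_{i_1 i_2 i_3 i_4}$ with $i_1 + i_2 + i_3 + i_4 = 0$.

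For each labeled network, I would then compute the ideal of the binomial parameterization in Macaulay2, modulo the universal linear invariants, following the approach of Proposition \ref{prop: AllJCideals}: compute the ideal up to a bounded degree, lower-bound its dimension using the rank of the Jacobian of the parameterization, and verify that the elements found so far generate a prime ideal of the correct dimension, thereby certifying a complete generating set. This should directly produce three tree ideals of dimension $6$ and three $4$-cycle ideals of dimension $7$, along with a list of $3$-cycle ideals to be compared against the tree ideals.

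The main obstacle, and the new phenomenon relative to Jukes-Cantor, is accounting for the much greater collapse of CFN ideals: each $4$-leaf $3$-cycle ideal must coincide with a tree ideal, and the twelve Jukes-Cantor $4$-cycle ideals must collapse to just three in CFN. For the $3$-cycle collapse I would argue in the style of Proposition \ref{prop: sameskeleton}, by direct inspection of the Fourier parameterization: because in CFN each edge carries only one nontrivial parameter, the two monomials summed in the $3$-cycle's parameterization of $q_{i_1 i_2 i_3 i_4}$ share all factors corresponding to edges off the cycle and differ only on cycle edges, so their sum depends on the leaf indices only through the $\mathbb{Z}_2$-element determined by the split the cycle induces on the leaves. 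That sum can then be absorbed into one new parameter on the corresponding tree edge, producing a monomial parameterization identical to that of the tree obtained by collapsing the cycle. For the $4$-cycle collapse, which I expect to be the most delicate step, I would exhibit further explicit Fourier reparameterizations witnessing the additional identifications available in CFN but not in Jukes-Cantor, where the restricted coordinate space allows labelings that are inequivalent as semi-directed networks to produce the same variety. Combining these parameter-absorption arguments with the Macaulay2 dimension and primality verifications then yields exactly the three $6$-dimensional and three $7$-dimensional ideals of the claim.
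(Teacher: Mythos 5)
Your proposal follows essentially the same route as the paper: the paper offers no written argument for this proposition, establishing it instead by the Macaulay2 computations described for Proposition \ref{prop: AllJCideals} and the accompanying remark (ideals computed in Fourier coordinates modulo the universal linear invariants, with dimensions certified by Jacobian rank and a primality check), which is exactly what you propose to do. One small quibble: your stated mechanism for the $3$-cycle collapse --- that the bracketed sum depends on the leaf indices only through the split element of the cycle --- is not literally true (it depends on $i_3$ and $i_4$ separately); the collapse instead follows, as in Proposition \ref{prop: sameskeleton}, from the fact that the $3$-leaf $3$-cycle variety fills its ambient coordinate space for CFN, but since your Macaulay2 computations would certify the resulting equalities of ideals in any case, this does not affect the correctness of the argument.
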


Returning again to Jukes-Cantor networks, we have the following corollaries to Proposition \ref{prop: AllJCideals}.

\begin{cor}
\label{cor: cyclecontainment}
 Let $\mathcal{N}_1$ be a $k_1$-cycle network and $\mathcal{N}_2$ be a $k_2$-cycle network.
If $2 \leq k_1 < k_2 \leq 4$, then $\mathcal{V}_{\mathcal{N}_2} \not \subseteq \mathcal{V}_{\mathcal{N}_1}$ and
$\mathcal{I}_{\mathcal{N}_1} \not \subseteq \mathcal{I}_{\mathcal{N}_2}.$
\end{cor}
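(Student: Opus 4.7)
The plan is to argue this as a direct dimension-counting consequence of Proposition \ref{prop: AllJCideals}. The two assertions are essentially equivalent via the Nullstellensatz, so the bulk of the work is the variety containment.

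First I would establish irreducibility of each variety. The variety $\mathcal{V}_{\mathcal{N}_i}$ is by definition the Zariski closure of the image of the polynomial map $\psi_{\mathcal{N}_i}$ from the irreducible affine space $\mathbb{C}^{s_i}$, so it is irreducible and its vanishing ideal $\mathcal{I}_{\mathcal{N}_i}$ is prime. In particular, set-theoretic containment of varieties is equivalent to the reverse containment of their prime defining ideals, so the two conclusions of the corollary are equivalent; it suffices to prove the variety statement.

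Next I would invoke Proposition \ref{prop: AllJCideals}, which tabulates the dimensions of all $4$-leaf Jukes-Cantor cycle-network ideals: $2$-cycle networks give $6$-dimensional varieties, $3$-cycle networks give $7$-dimensional varieties, and $4$-cycle networks give $8$-dimensional varieties. Thus, under the hypothesis $2 \leq k_1 < k_2 \leq 4$, we have
\[
\dim \mathcal{V}_{\mathcal{N}_1} < \dim \mathcal{V}_{\mathcal{N}_2}.
\]
If $\mathcal{V}_{\mathcal{N}_2} \subseteq \mathcal{V}_{\mathcal{N}_1}$, then since $\mathcal{V}_{\mathcal{N}_2}$ is irreducible and closed, standard dimension theory of algebraic varieties forces $\dim \mathcal{V}_{\mathcal{N}_2} \leq \dim \mathcal{V}_{\mathcal{N}_1}$, a contradiction. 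Hence $\mathcal{V}_{\mathcal{N}_2} \not\subseteq \mathcal{V}_{\mathcal{N}_1}$, and the equivalent ideal statement $\mathcal{I}_{\mathcal{N}_1} \not\subseteq \mathcal{I}_{\mathcal{N}_2}$ follows.

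There is no serious obstacle here; the only subtle point is making sure the irreducibility and primality are explicitly noted so that the equivalence between the variety and ideal containments is rigorous. All the computational heavy lifting has been absorbed into Proposition \ref{prop: AllJCideals}, so the proof itself is essentially a two-line dimension argument together with the remark about prime ideals.
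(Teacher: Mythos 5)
Your proof is correct and is exactly the argument the paper intends: the corollary is stated without a written proof as an immediate consequence of the dimension counts in Proposition \ref{prop: AllJCideals}, and the monotonicity of dimension under containment of (closed) varieties together with the containment-reversing correspondence between varieties and their vanishing ideals is all that is needed. The only harmless extra in your write-up is the appeal to irreducibility and primality, which is not actually required here (dimension is monotone under inclusion for arbitrary varieties, and the ideal--variety equivalence already follows from $\mathcal{I}_{\mathcal{N}}$ being the vanishing ideal of $\mathcal{V}_{\mathcal{N}}$), though the paper does record primality for use in the companion Corollary \ref{cor: distinct4cycles}.
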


\begin{cor}
\label{cor: distinct4cycles}
 Let $\mathcal{N}_1$ and $\mathcal{N}_2$ be distinct $4$-leaf $4$-cycle networks.
Then $\mathcal{V}_{\mathcal{N}_2} \not \subseteq \mathcal{V}_{\mathcal{N}_1}$,
$\mathcal{V}_{\mathcal{N}_1} \not \subseteq \mathcal{V}_{\mathcal{N}_2}$,
$\mathcal{I}_{\mathcal{N}_1} \not \subseteq \mathcal{I}_{\mathcal{N}_2}$,
 and
$\mathcal{I}_{\mathcal{N}_2} \not \subseteq \mathcal{I}_{\mathcal{N}_1}.$
\end{cor}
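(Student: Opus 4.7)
The plan is to deduce the corollary directly from Proposition \ref{prop: AllJCideals} via a dimension argument. That proposition already establishes that the $12$ labeled $4$-leaf $4$-cycle networks give rise to exactly $12$ distinct ideals, all of dimension $8$. The only remaining work is to convert ``distinct'' into ``incomparable,'' which is a standard consequence of irreducibility.

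First I would record that $\mathcal{V}_\mathcal{N}$ is irreducible for every cycle-network $\mathcal{N}$. This is immediate from the definition: $\mathcal{V}_\mathcal{N}$ is the Zariski closure of the image of the polynomial map $\psi_\mathcal{N}: \mathbb{C}^s \to \mathbb{C}^{4^n}$, and continuous images (in the Zariski topology) of irreducible sets are irreducible, as are their closures. Now suppose, toward contradiction, that $\mathcal{V}_{\mathcal{N}_2} \subseteq \mathcal{V}_{\mathcal{N}_1}$ for distinct $4$-leaf $4$-cycle networks $\mathcal{N}_1$ and $\mathcal{N}_2$. By Proposition \ref{prop: AllJCideals}, both $\mathcal{V}_{\mathcal{N}_1}$ and $\mathcal{V}_{\mathcal{N}_2}$ are irreducible and $8$-dimensional, so the containment must be an equality, giving $\mathcal{I}_{\mathcal{N}_1} = \mathcal{I}_{\mathcal{N}_2}$. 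This contradicts the count of $12$ distinct $4$-cycle ideals provided by Proposition \ref{prop: AllJCideals}. The reverse containment is ruled out by swapping the roles of $\mathcal{N}_1$ and $\mathcal{N}_2$.

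The ideal-theoretic half of the corollary then follows formally: if $\mathcal{I}_{\mathcal{N}_1} \subseteq \mathcal{I}_{\mathcal{N}_2}$, then taking vanishing loci reverses the inclusion to give $\mathcal{V}_{\mathcal{N}_2} \subseteq \mathcal{V}_{\mathcal{N}_1}$, which we have just excluded; symmetrically for the other containment. There is no real obstacle here, since all of the nontrivial content is hidden in Proposition \ref{prop: AllJCideals}, whose verification is computational and carried out in Macaulay2. The one point worth double-checking when writing this up is that the tally of $12$ distinct ideals genuinely corresponds to $12$ distinct labeled $4$-cycle networks (i.e., that the $S_4$ action on leaf labels has been correctly accounted for), so that the hypothesis ``distinct'' in the corollary matches the setup of the enumeration.
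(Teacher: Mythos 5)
Your argument is correct and is essentially the paper's own: the authors justify this corollary in the paragraph following it by noting that each network ideal is prime (being the kernel of a ring homomorphism induced by the parameterization), that a prime ideal containing a prime ideal of the same dimension must equal it, and that Proposition \ref{prop: AllJCideals} supplies twelve distinct $8$-dimensional ideals --- which is exactly your irreducibility-plus-equal-dimension argument phrased algebraically rather than geometrically. Your closing caveat about matching the twelve ideals to the twelve distinct labeled $4$-cycle networks is the right thing to flag, and it is precisely what the computational content of Proposition \ref{prop: AllJCideals} certifies.
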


The network ideals described in Proposition \ref{prop: AllJCideals} of the same dimension differ only by a permutation of the coordinates. Since each network ideal is parameterized, the ideal can be written as the
kernel of a homomorphism, and, consequently, it is prime. If an ideal contains a prime ideal of the same dimension, then the two ideals are equal. Therefore, the network ideals of the same dimension are either equal or distinguishable. 

\begin{figure}[h]
	  \caption{The $4$-leaf cycle-network poset with $\mathcal{N}_1 \prec \mathcal{N}_2$
	  if and only if $\mathcal{V}_{\mathcal{N}_1} \subset \mathcal{V}_{\mathcal{N}_2}$. }
	\label{fig: poset} 
	\begin{center}
		\includegraphics[width=14cm]{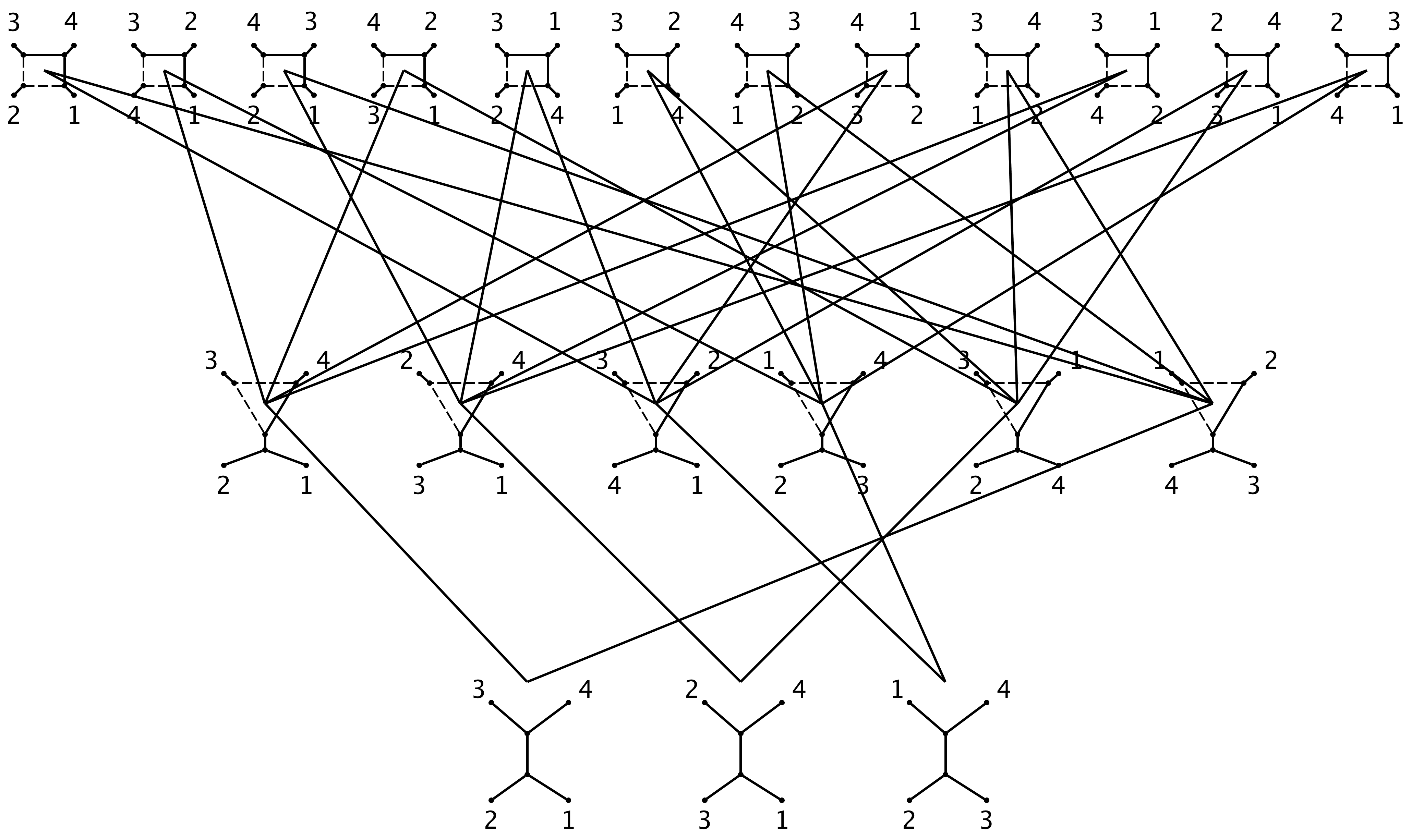}
		\end{center}
\end{figure}

The poset pictured in Figure \ref{fig: poset} shows the containment relationships among the 
21 equivalence classes 
of 4-leaf cycle-network ideals.
It is possible to verify the containment of the 3-cycle network varieties
inside the 4-cycle network varieties by showing the
reverse inclusion of the ideals by computation. 
However, we can also see this from the structure of the networks themselves.

\begin{ex}
\label{ex: collapse edge}
Choose each of the Fourier parameters associated to the edge $g_8$ to be equal to 1
in the 4-cycle network from Figure \ref{fig: 4LeafTopologies}.
This essentially collapses the edge $g_8$ in the network to produce
a 4-leaf 3-cycle network. 
We can construct this new network by attaching a 1-2 cherry to a 
3-leaf 3-cycle network with a single leaf edge removed.
As noted after Proposition \ref{prop: notidentifiable}, 
there are no invariants for the 3-leaf 3-cycle network even with all leaf edges removed.
Therefore, the same arguments from the proof of Proposition \ref{prop: sameskeleton}
show that the variety for this network is equal to both 
$\mathcal{V}_{\mathcal{N}_1}$ and $\mathcal{V}_{\mathcal{N}_2}$.
Therefore, both of these network varieties are contained in $\mathcal{V}_{\mathcal{N}_3}$.
Collapsing the other solid edge $g_7$ in the cycle
of the 4-cycle network shows that $\mathcal{V}_{\mathcal{N}_3}$ also contains the 
variety of any 4-leaf 3-cycle network with a 2-3 cherry.
\end{ex}

\subsection{Distinguishing Large-Cycle Networks}
\label{sec: largekcycles}

In the previous section, we showed that it is possible to distinguish some
cycle-networks with only a few leaves
from one another by computing the ideals for these networks explicitly. 
In this section, we collect
the results needed to prove Theorem \ref{thm: main}. 
That is, we will show that if $\mathcal{N}_1$ is an 
$n$-leaf $k_1$-cycle network and $\mathcal{N}_2$ is a distinct 
$n$-leaf $k_2$-cycle network with $k_1, k_2 \geq 4$, then
$\mathcal{V}_{\mathcal{N}_{1}} \not \subseteq 
\mathcal{V}_{\mathcal{N}_{2}}$. 

The three lemmas below address the three cases, where $k_1 = k_2$, $k_1 > k_2$, and $k_1 < k_2$.
In each of the lemmas we will assume that 
$\mathcal{N}_1$ is a
$k_1$-cycle network and $\mathcal{N}_2$ is a
$k_2$-cycle network.
We assume that the cycle vertices of $\mathcal N_1$ and $\mathcal N_2$ 
are labeled according to the convention described in Section \ref{sec:phylogenetic-networks} 
so that the induced partition of $[n]$ is $A_1 | A_2 | \ldots | A_{k_1}$ in 
$\mathcal N_1$ and $B_1 | B_2 | \ldots| B_{k_2}$ in $\mathcal N_2$. 
The goal in each case of each of the lemmas below will
be to find $S \subset [n]$ such that 
$\mathcal V_{\mathcal N_{1 | S}} \not \subseteq \mathcal V_{\mathcal N_{2 | S}}$,
 which by Proposition \ref{prop: restriction distinguishable}, 
 implies that 
$\mathcal V_{\mathcal N_1} \not \subseteq  \mathcal V_{\mathcal N_2}$.
One result that will use repeatedly is the generic identifiability 
of the tree topology of a Jukes-Cantor tree model.
This is a well-known result with multiple independent proofs \cite{Chang1991, Steel1993}. 

\begin{lemma}\label{lem:cycles-equal}  Let $\mathcal N_1$ and $\mathcal N_2$ be two distinct $k$-cycle networks with $k\geq4$. Then $\mathcal V_{\mathcal N_1} \not \subseteq  \mathcal V_{\mathcal N_2}$ and $\mathcal V_{\mathcal N_2} \not \subseteq \mathcal V_{\mathcal N_1}$.
\end{lemma}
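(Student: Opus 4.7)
My plan is to invoke Proposition \ref{prop: restriction distinguishable}, which reduces the task to exhibiting, for each distinct pair $\mathcal{N}_1, \mathcal{N}_2$, a subset $S \subseteq [n]$ for which the restrictions give varieties whose incomparability is detected by the 4-leaf classification in Corollaries \ref{cor: cyclecontainment} and \ref{cor: distinct4cycles} or by generic identifiability of the Jukes-Cantor tree topology \cite{Chang1991, Steel1993}. I will always keep $|S|$ small enough (typically $|S| = 4$) so that the restrictions fall into classes that have already been analyzed.

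Two distinct $n$-leaf $k$-cycle networks can disagree in three mutually exclusive ways: (I) the induced set partitions $\{A_1, \ldots, A_k\}$ and $\{B_1, \ldots, B_k\}$ of $[n]$ differ; (II) the set partitions agree but the cyclic order of the parts around the cycle differs; or (III) the ordered partitions coincide while the semi-directed subtree $\mathcal{T}_{v_i}$ at some cycle vertex differs, which forces $|A_{v_i}| \geq 3$. In Case (I), both partitions have $k$ blocks and neither refines the other, so I can select $a, b$ sharing a block $A_i$ of $\mathcal{N}_1$ but split between blocks $B_{j_1}, B_{j_2}$ of $\mathcal{N}_2$, and then two further leaves $c, d$ so that $S = \{a, b, c, d\}$ spans three distinct blocks of $\mathcal{N}_1$ and four distinct blocks of $\mathcal{N}_2$ (possible since $k \geq 4$ and the remaining blocks are non-empty). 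Then $\mathcal{N}_{1|S}$ is a 4-leaf $3$-cycle network while $\mathcal{N}_{2|S}$ is a 4-leaf $4$-cycle network, so Corollary \ref{cor: cyclecontainment} gives $\mathcal{V}_{\mathcal{N}_{2|S}} \not\subseteq \mathcal{V}_{\mathcal{N}_{1|S}}$; the reverse non-containment follows from the symmetric choice of a pair $a', b'$ sharing a block of $\mathcal{N}_2$ but split in $\mathcal{N}_1$. In Case (II), any difference between the linear orders on $\{A_2, \ldots, A_k\}$ restricts to a difference on some triple, so together with $A_1$ I obtain four parts whose cyclic order disagrees between the two networks; a single leaf from each makes $\mathcal{N}_{1|S}$ and $\mathcal{N}_{2|S}$ distinct 4-leaf $4$-cycle sunlets, and Corollary \ref{cor: distinct4cycles} supplies both non-containments simultaneously.

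In Case (III), I fix $b \in A_{v_j}$ with $j \neq i$ and take $S = A_{v_i} \cup \{b\}$. Since only two cycle vertices support leaves of $S$, each restriction collapses via parallel-edge suppression to a phylogenetic tree on $S$, and the differing semi-directed shapes of $\mathcal{T}_{v_i}$ produce distinct unrooted tree topologies in $\mathcal{N}_{1|S}$ and $\mathcal{N}_{2|S}$, with $b$ playing the role of an outgroup attached at the position of $v_i$. Generic identifiability of the Jukes-Cantor tree topology then yields both non-containments. I expect Case (III) to be the main obstacle: here one must verify that the restriction operation faithfully preserves the semi-directed tree shape at $v_i$ after the suppression of degree-two vertices and the collapse of parallel edges required by Definition \ref{defn: Restriction}, and that the resulting unrooted trees on $A_{v_i} \cup \{b\}$ genuinely differ whenever $\mathcal{T}_{v_i}$ does. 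Cases (I) and (II) reduce cleanly to the 4-leaf classification once one checks that suitable auxiliary leaves can always be located, which is a routine pigeonhole using $k \geq 4$ and non-emptiness of the blocks; the technical heart of the argument therefore lies in tracking the semi-directed subtree data through the restriction in Case (III).
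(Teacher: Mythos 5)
Your overall strategy---restrict to a small leaf subset, invoke Proposition \ref{prop: restriction distinguishable}, and finish with Corollaries \ref{cor: cyclecontainment} and \ref{cor: distinct4cycles} or tree identifiability---is the same as the paper's, and Cases (I) and (III) are essentially sound (indeed, your outgroup trick in (III) is slightly more careful than the paper's corresponding reduction, and your betweenness argument in (II) is a legitimate substitute for the paper's caterpillar argument). However, your trichotomy is not exhaustive, and the omitted case is precisely one the paper treats separately. Two distinct $k$-cycle networks can induce the same set partition, the same (unmarked) cyclic arrangement of the blocks around the cycle, and identical pendant subtrees, and yet differ because the reticulation vertex is attached to a \emph{different} block. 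Such pairs are genuinely distinct semi-directed networks: for example, two $4$-sunlets with the same cyclic leaf order but reticulation at different cycle vertices, and Proposition \ref{prop: AllJCideals} counts all $3\times 4 = 12$ resulting ideals as distinct. This situation falls into none of your cases: it is not (I); your argument for (II) presupposes $A_1 = B_1$ and hunts for a disagreement in the linear orders of the remaining blocks, of which there is none here; and it is not (III), since the ordered partitions do not coincide. The repair is the paper's ``$i_1\neq 1$'' subcase: take $S$ to contain one leaf from each of the two reticulation blocks $A_1$ and $B_1$ together with two further leaves chosen so that both restrictions are $4$-leaf $4$-cycle networks; these restrictions are distinct because their reticulation leaves differ, and Corollary \ref{cor: distinct4cycles} gives both non-containments.

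A smaller point in Case (I): for $\mathcal{N}_{2|S}$ to be a $4$-cycle network it is not enough that $S$ meet four distinct blocks of $\mathcal{N}_2$. If $S$ misses the reticulation block $B_1$, the reticulation vertex has no descendants in $S$, so it is deleted in the restriction and $\mathcal{N}_{2|S}$ degenerates to a tree---which would hand you the non-containment in the wrong direction via Corollary \ref{cor: cyclecontainment}. You must choose the auxiliary leaves so that $S\cap B_1\neq\emptyset$ (and symmetrically so that the set used for the reverse non-containment meets $A_1$); this is always possible since $k\geq 4$, but it needs to be said.
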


\begin{proof}  

\medskip

\noindent 
\emph{Case 1:} $\{ A_1, \ldots, A_k \} \neq \{B_1, \ldots, B_k\}.$
\medskip

\noindent
Since $\{ A_1, \ldots, A_k \} \neq \{B_1, \ldots, B_k\}$ there exist $\ell, i, j \in [k]$ with $i\neq j$ and $a, b \in [n]$  such that $a, b \in A_{\ell}$ while $a \in B_i$ and $b \in B_j$.  Let $S \subset [n]$ contain $a, b,$ and two additional leaf labels so that
$\mathcal{N}_{2|S}$ is a 4-leaf 4-cycle network.
Since $a,b \in A_{\ell}$, $\mathcal{N}_{1|S}$ is either a 2 or 3-cycle network. 
In either case, Corollary \ref{cor: cyclecontainment} implies that 
$\mathcal V_{\mathcal N_{2 | S}} \not \subseteq \mathcal V_{\mathcal N_{1 | S}}$.
By a similar argument, we can show that 
$\mathcal V_{\mathcal N_{1 | S}} \not \subseteq \mathcal V_{\mathcal N_{2 | S}}$.

\medskip

\noindent 
\emph{Case 2:} $\{ A_1, \ldots, A_k \} = \{B_1, \ldots ,B_k\}$.
\medskip

\noindent
If $\{ A_1, \ldots, A_k \} = \{B_1, \ldots ,B_k\}$, then we can assume that if $A_i =B_j$ then $\mathcal N_{1 |A_i} = \mathcal N_{1 | B_j}$, else the desired result follows from the result for trees.  Furthermore, we can assume that there exists an $i$ such that $A_i \neq B_i$ (else $\mathcal N_1 = \mathcal N_2$). Thus, $B_1, \ldots, B_k$ is simply a reordering of $A_1, \ldots, A_k$, i.e. $A_1| \ldots | A_k = B_{i_1} | \ldots |B_{i_k}$. Without loss of generality, we can view each network as a $k$-sunlet network, with the pendant edges of $\mathcal N_1$ labeled $1, \ldots, k$ starting from the reticulation vertex and proceeding clockwise and with the pendant edges of $\mathcal N_2$ labeled $i_1, \ldots, i_k$ starting from the reticulation vertex and proceeding clockwise. 

Assume $i_1 \neq 1$ and let 
$S = \{1, i_1, a, b\}$ where $a, b \in [k]\setminus \{1, i_1\}$ are two additional leaves with $a \neq b$. 
Then, since they do not have the same reticulation vertex, $\mathcal N_{ 1| S}$ 
and $\mathcal N_{2 | S}$ are distinct 4-leaf 4-cycle networks, and so by Corollary \ref{cor: distinct4cycles},
$\mathcal V_{\mathcal N_{1 | S}} \not \subseteq \mathcal V_{\mathcal N_{2 | S}}$ and 
$\mathcal V_{\mathcal N_{2 | S}} \not \subseteq \mathcal V_{\mathcal N_{1 | S}}$. 
If $i_1 =1$, then choose $S$ to be $[n] \setminus \{1\}$.
Then $\mathcal{N}_{1|S}$ is a caterpillar 
tree with cherries labeled by $\{2,3\}$ and $\{k-1,k\}$ and 
$\mathcal{N}_{2|S}$ is a caterpillar 
tree with cherries labeled by $\{i_2,i_3\}$ and $\{i_{k-1},i_k\}$.
If these trees are not identical, then again, results for single tree models imply
$\mathcal V_{\mathcal N_{1 | S}} \not \subseteq \mathcal V_{\mathcal N_{2 | S}}$ and
$\mathcal V_{\mathcal N_{2 | S}} \not \subseteq \mathcal V_{\mathcal N_{1 | S}}$.
If these trees are identical, then since 
$\mathcal N_{1} \not = \mathcal N_{2}$ it must be that for either $S = \{1,2,3,k\}$
or for $S = \{1,2,k-1,k\}$, $\mathcal N_{1|S}$ and $\mathcal N_{2|S}$
are distinct 4-leaf 4-cycle networks, and the result follows again by 
Corollary \ref{cor: distinct4cycles}.

\end{proof}

\begin{lemma}\label{lem:cycles-greater}  Let $\mathcal N_1$ be a $k_1$-cycle network and $\mathcal N_2$ be a $k_2$-cycle network with $k_1 > k_2$. Then $\mathcal V_{\mathcal N_1} \not \subseteq  \mathcal V_{\mathcal N_2}$.
\end{lemma}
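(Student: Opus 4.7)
The plan is to mirror the strategy of Lemma \ref{lem:cycles-equal} and invoke Proposition \ref{prop: restriction distinguishable}: I will find a subset $S \subseteq [n]$ of size four such that $\mathcal{N}_{1|S}$ is a $4$-leaf $4$-cycle network while $\mathcal{N}_{2|S}$ is a $2$- or $3$-cycle network, and then appeal to Corollary \ref{cor: cyclecontainment} to conclude $\mathcal{V}_{\mathcal{N}_{1|S}} \not\subseteq \mathcal{V}_{\mathcal{N}_{2|S}}$.

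The key structural input is a pigeonhole statement: since $\{A_1, \ldots, A_{k_1}\}$ and $\{B_1, \ldots, B_{k_2}\}$ partition $[n]$ with $k_1 > k_2$, the $B$-partition cannot refine the $A$-partition (else $k_2 \geq k_1$). Consequently there exist indices $i \neq i'$ and $j$ together with leaves $a \in A_i \cap B_j$ and $b \in A_{i'} \cap B_j$. In other words, some pair of leaves sits at a common cycle vertex of $\mathcal{N}_2$ but at two different cycle vertices of $\mathcal{N}_1$.

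With such $a$ and $b$ in hand, I extend to $S = \{a, b, c, d\}$ chosen so that $S$ meets exactly four of the classes $A_\ell$, one of which is $A_1$, with $|S \cap A_\ell| = 1$ for each of these four classes. If $\{i, i'\} \cap \{1\} \neq \emptyset$ then $A_1$ is already hit, and I pick $c, d$ from two additional $A$-classes; otherwise I take $c \in A_1$ and $d$ from a fourth distinct $A$-class. The hypothesis $k_1 \geq 4$ (which is in force in the large-cycle setting of Theorem \ref{thm: main}) guarantees that enough classes remain to make these choices. By construction, $\mathcal{N}_{1|S}$ is a $4$-leaf $4$-cycle network.

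It remains to count the cycle vertices of $\mathcal{N}_2$ that are hit by $S$. Because $a$ and $b$ already lie in the common class $B_j$, they contribute only a single $B$-class, and $c, d$ contribute at most two more, so $S$ meets at most three $B$-classes. If one of these is $B_1$, then $\mathcal{N}_{2|S}$ is a $3$-cycle or smaller; if $S \cap B_1 = \emptyset$, then the reticulation vertex of $\mathcal{N}_2$ does not survive the restriction and $\mathcal{N}_{2|S}$ collapses to a tree, i.e.\ a $2$-cycle network. In every case Corollary \ref{cor: cyclecontainment} yields the desired non-containment of restricted varieties, and Proposition \ref{prop: restriction distinguishable} promotes this to $\mathcal{V}_{\mathcal{N}_1} \not\subseteq \mathcal{V}_{\mathcal{N}_2}$. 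The main obstacle is not any single computation, but rather keeping the bookkeeping straight in the case split for constructing $S$; once the pigeonhole observation converts the hypothesis $k_1 > k_2$ into a concrete asymmetry between the partitions, the rest is a careful but routine verification.
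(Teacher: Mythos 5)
Your proof is correct and follows essentially the same route as the paper's: find $a,b$ lying in a common $B$-class but in two different $A$-classes (the paper states this directly from $k_1>k_2$; your refinement/pigeonhole phrasing is the same observation), extend to a four-element $S$ meeting $A_1$ so that $\mathcal{N}_{1|S}$ is a $4$-leaf $4$-cycle network while $\mathcal{N}_{2|S}$ is a tree or a $3$-cycle network, and conclude via Corollary \ref{cor: cyclecontainment} and Proposition \ref{prop: restriction distinguishable}; you merely make explicit the bookkeeping that the paper leaves implicit in ``choose two additional leaf labels.'' The only point not covered is the case $k_1<4$ (e.g.\ $k_1=3$, $k_2=2$), which the paper dispatches in one line via Corollary \ref{cor: cyclecontainment} and which you correctly note is not needed for Theorem \ref{thm: main}.
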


\begin{proof} 
We may assume $k_1 \geq 4$, since otherwise the result follows from Corollary \ref{cor: cyclecontainment}.
Since $k_1 > k_2$, there exist $\ell, i, j$ with $i\neq j$ and $a, b \in [n]$  such that $a, b \in B_{\ell}$ while $a \in A_i$ and $b \in A_j$.  
As in Lemma \ref{lem:cycles-equal} let $S \subset [n]$ contain $a, b$ and two additional leaf labels so that
$\mathcal{N}_{1|S}$ is a 4-leaf 4-cycle network. Again, 
$\mathcal N_{2 | S}$ must be either a tree or a $3$-cycle network and the result follows by
Corollary \ref{cor: cyclecontainment} and Proposition \ref{prop: restriction distinguishable}.
\end{proof}

\begin{figure}
\caption{The two possible networks equal to $\mathcal{N}_{1|S}$ in the proof of 
Lemma \ref{lem:cycles-lesser}.}
\label{fig: 5leaf4cycles}
\includegraphics[width=10cm]{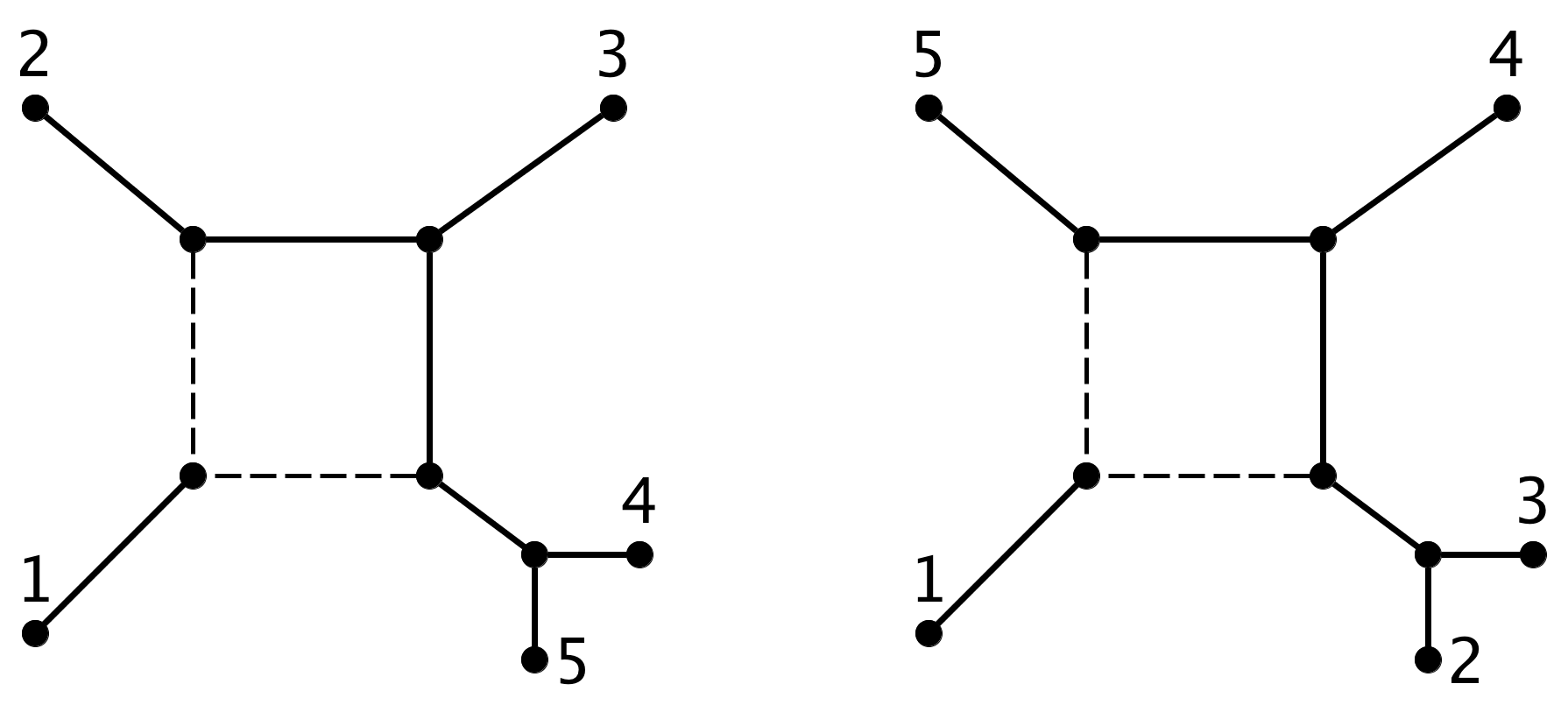}
\end{figure}

\begin{lemma}\label{lem:cycles-lesser}  Let $\mathcal N_1$ be a $k_1$-cycle network and $\mathcal N_2$ be a $k_2$-cycle network with $4 \leq k_1 < k_2$. Then $\mathcal V_{\mathcal N_1} \not \subseteq  \mathcal V_{\mathcal N_2}$.
\end{lemma}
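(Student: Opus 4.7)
My plan is to produce a 5-leaf subset $S \subseteq [n]$ with $\mathcal V_{\mathcal N_{1|S}} \not\subseteq \mathcal V_{\mathcal N_{2|S}}$; then Proposition \ref{prop: restriction distinguishable} gives $\mathcal V_{\mathcal N_1} \not\subseteq \mathcal V_{\mathcal N_2}$. First, since $k_1 < k_2$, the $A$-partition cannot refine the $B$-partition (otherwise $k_1 \geq k_2$), so some $A$-block $A_\ell$ contains leaves $a,b$ with $a \in B_i$ and $b \in B_j$ for $i \ne j$. Using $k_1 \geq 4$, I extend $\{a,b\}$ to $S = \{a,b,c,d,e\}$ by adjoining three leaves from three distinct $A$-blocks other than $A_\ell$, arranging for $A_1$ to be among the four $A$-blocks represented in $S$. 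Then $\mathcal N_{1|S}$ is a 5-leaf $4$-cycle network in which $\{a,b\}$ forms a cherry at the cycle vertex corresponding to $A_\ell$; by choosing $c,d,e$ appropriately (which the freedom provided by $k_1 \geq 4$ allows), I can place $A_\ell$ either adjacent to or opposite the reticulation, producing one of the two topologies in Figure \ref{fig: 5leaf4cycles}.

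Let $k'_2 \in \{2,\ldots,5\}$ be the number of distinct $B$-blocks meeting $S$, noting that $k'_2 \geq 2$ because $a,b$ are in different $B$-blocks. When $k'_2 \leq 4$, I plan to drop a well-chosen leaf of $S$ to obtain a four-element subset $S' \subsetneq S$ such that $\mathcal N_{1|S'}$ remains a 4-leaf 4-cycle network (because the remaining four leaves still cover all four $A$-blocks of $\mathcal N_{1|S}$, including $A_1$) while $\mathcal N_{2|S'}$ has cycle size at most $3$. For $k'_2 \leq 3$ this is achieved by dropping either $a$ or $b$; for $k'_2 = 4$, exactly one $B$-block contains two of the five leaves, and dropping whichever of $a,b$ lies alone in its $B$-block forces a further $B$-block to disappear, reducing the $\mathcal N_2$-side cycle to size at most $3$. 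In each of these subcases Corollary \ref{cor: cyclecontainment} delivers $\mathcal V_{\mathcal N_{1|S'}} \not\subseteq \mathcal V_{\mathcal N_{2|S'}}$, and a second application of Proposition \ref{prop: restriction distinguishable} finishes.

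The main obstacle is the case $k'_2 = 5$, in which the five leaves lie in five distinct $B$-blocks and $\mathcal N_{2|S}$ is either a 5-leaf tree (when $B_1 \cap S = \emptyset$) or a 5-sunlet (when $B_1 \cap S \ne \emptyset$). Here no 4-leaf restriction of $S$ reduces the $\mathcal N_2$-side cycle size enough to fit the framework of Corollary \ref{cor: cyclecontainment} in the correct direction. I plan to dispatch this case by explicit computation in Macaulay2 of the Jukes-Cantor network ideals in Fourier coordinates, for each of the two candidates for $\mathcal N_{1|S}$ from Figure \ref{fig: 5leaf4cycles} paired with each of the finitely many 5-leaf trees and 5-sunlet topologies that can arise as $\mathcal N_{2|S}$, and to exhibit in each pair a phylogenetic invariant in $I_{\mathcal N_{2|S}}$ that does not vanish on $\mathcal V_{\mathcal N_{1|S}}$. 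The structural reason such invariants must exist is that $\{a,b\}$ sits as a cherry below a single cycle vertex of $\mathcal N_{1|S}$, whereas in $\mathcal N_{2|S}$ the two leaves necessarily descend from two distinct cycle (or tree) vertices because $a,b$ lie in different $B$-blocks.
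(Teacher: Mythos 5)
Your overall architecture matches the paper's: build a five-element leaf set $S$ from two leaves $a,b$ lying in one $A$-block but in different $B$-blocks, dispose of the cases where $\mathcal N_{2|S}$ has small cycle size by a further four-leaf restriction together with Corollary \ref{cor: cyclecontainment} and Proposition \ref{prop: restriction distinguishable}, and attack the remaining case (where $\mathcal N_{1|S}$ is a five-leaf $4$-cycle network and $\mathcal N_{2|S}$ a $5$-sunlet) with an explicit invariant. Your case analysis for $k'_2 \leq 4$ is correct and in fact spells out details the paper compresses.

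There are, however, two genuine gaps in the case $k'_2 = 5$. First, the claim that you can ``place $A_\ell$ either adjacent to or opposite the reticulation'' by choosing $c,d,e$ is not available: when $k_1 = 4$ all three remaining $A$-blocks must be used, and in any case the cyclic position of $A_\ell$ relative to $A_1$ is dictated by $\mathcal N_1$, not by your choice of $c,d,e$; moreover $A_\ell = A_1$ is possible, in which case the cherry $\{a,b\}$ hangs below the reticulation vertex, a configuration outside the two of Figure \ref{fig: 5leaf4cycles}. You also cannot control how the cyclic order of the $B$-blocks around the $5$-sunlet aligns with the positions of $c,d,e$, so the pair $(\mathcal N_{1|S}, \mathcal N_{2|S})$ ranges over more labeled configurations than the two you plan to check. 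The paper closes this by arguing that every configuration other than the two pictured admits a four-element $S'\subset S$ on which the restrictions are either distinct trees, or a $3$-cycle network versus a tree; you need an analogous reduction or a provably exhaustive list of pairs. Second, the decisive step --- producing an element of $I_{\mathcal N_{2|S}}$ that does not vanish on $\mathcal V_{\mathcal N_{1|S}}$ --- is only announced, and the announced method (computing the full five-leaf Jukes--Cantor network ideals) is almost certainly infeasible: the paper states in its discussion that it was unable to compute these ideals, and instead exhibits one explicit cubic in the $5$-sunlet ideal and verifies by direct substitution of the Fourier parameterizations that it is nonzero on both candidate $4$-cycle varieties. Your ``structural reason'' (the cherry $\{a,b\}$ versus two distinct cycle vertices) is only a heuristic and cannot substitute for this verification; Proposition \ref{prop: sameskeleton} and Figure \ref{fig: 2IndistinguishableNetworks} show that structurally different networks can nonetheless have equal or nested varieties.
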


\begin{proof}  
By similar arguments to those in the proof of Lemma \ref{lem:cycles-equal}, if $B_1 | B_2 | \ldots |B_{k_2}$ is not a refinement of $A_1 | A_2 | \ldots | A_{k_1}$, then $\mathcal V_{\mathcal N_1} \not \subseteq  \mathcal V_{\mathcal N_2}$.
Thus, we will assume that $B_1 | B_2 | \ldots |B_{k_2}$ is a refinement of $A_1 | A_2 | \ldots | A_{k_1}$.  

Since $B_1 | B_2 | \ldots |B_{k_2}$ refines $A_1 | A_2 | \ldots | A_{k_1}$, there exist $i,j,\ell_1$ such that $B_i, B_j \subset A_{\ell_1}$. Construct the set $S$ consisting of $a \in B_i$, $b \in B_j$, and any three other leaf labels so
that $\mathcal{N}_1$ is a 5-leaf 4-cycle network. Now by construction, $\mathcal{N}_2$ must be a 5-leaf 5-cycle network. Thus, up to relabeling, we can assume that 
$\mathcal{N}_{2|S}$ is the 5-sunlet network with the leaf labeled by 1 attached to the 
reticulation vertex and all other leaves labeled in consecutive order around the sunlet.
$\mathcal{N}_{1|S}$ could now be any one of several 5-leaf 4-cycle networks.
However, if it is any network other
than one of the two pictured in Figure \ref{fig: 5leaf4cycles}, 
then there exists a 4-element subset $S' \subset S$ 
such that either $\mathcal{N}_{1|S'}$ and $\mathcal{N}_{2|S'}$ are distinct trees or 
$\mathcal{N}_{1|S'}$ is a 3-cycle network and $\mathcal{N}_{2|S'}$ is a tree. 
In either event, this would imply 
$\mathcal{V}_{\mathcal{N}_{1|S'}} \not \subseteq \mathcal{V}_{\mathcal{N}_{2|S'}}$
and the result follows. Thus, we may assume that $\mathcal{N}_{1|S}$
is one of the two networks pictured in Figure \ref{fig: 5leaf4cycles}.

Representing $A$, $C$, $G$, and $T$ by $0$, $1$, $2$, and $3$, the following cubic is in the 5-sunlet network ideal 
\begin{align*}
& {q}_{(0,1,0,1,0)} {q}_{(2,1,1,0,2)} {q}_{(1,1,3,1,2)}-
    {q}_{(0,1,1,0,0)} {q}_{(2,1,0,1,2)}{q}_{(1,1,3,1,2)}+\\
& {q}_{(0,1,1,0,0)} {q}_{(2,1,0,2,1)} {q}_{(1,1,3,1,2)}-
   {q}_{(0,1,0,1,0)} {q}_{(2,1,1,0,2)} {q}_{(2,1,3,1,1)}+\\
& {q}_{(0,1,1,0,0)} {q}_{(2,1,0,1,2)}{q}_{(2,1,3,1,1)}-
   {q}_{(0,1,1,0,0)} {q}_{(2,1,0,2,1)} {q}_{(2,1,3,1,1)}-\\
& {q}_{(0,1,0,1,0)} {q}_{(2,1,1,0,2)} {q}_{(1,1,1,3,2)}+
   {q}_{(0,1,1,0,0)} {q}_{(2,1,0,1,2)} {q}_{(1,1,1,3,2)}-\\
& {q}_{(0,1,1,0,0)} {q}_{(2,1,0,2,1)} {q}_{(1,1,1,3,2)}+
   {q}_{(0,1,1,0,0)} {q}_{(2,1,0,2,1)} {q}_{(2,1,1,1,3)}.
\end{align*}     
Substituting the parameterization for each of the 
two 4-cycle networks pictured into this polynomial, we find that 
it must vanish on $\mathcal{V}_{\mathcal{N}_{2|S}}$ but not on $\mathcal{V}_{\mathcal{N}_{1|S}}$.
Thus, 
$\mathcal{V}_{\mathcal{N}_{1|S}} \not \subseteq
\mathcal{V}_{\mathcal{N}_{2|S}}$.
\end{proof}

Finally, we are able to give the proof of the main theorem.

\begin{proof}[Proof of Theorem \ref{thm: main}]
By Proposition  \ref{prop: identifiability}, 
the network parameter of a phylogenetic network model is generically identifiable if 
for all $n \in \mathbb{N}$, all 
pairs of $n$-leaf networks are distinguishable.  
So let 
$\mathcal{N}_1$ be an $n$-leaf $k_1$-cycle network and 
$\mathcal{N}_2$ be an $n$-leaf $k_2$-cycle network with $k_1,k_2 \geq 4$.
By application of one of 
Lemmas \ref{lem:cycles-equal}, \ref{lem:cycles-greater}, or \ref{lem:cycles-lesser},
$\mathcal{N}_1$ and 
$\mathcal{N}_2$ are distinguishable.
\end{proof}

\section{Discussion and Open Problems}
\label{sec: discussion}

We have shown that the semi-directed network topology of a Jukes-Cantor network is not necessarily 
identifiable even when restricting to networks with a single reticulation vertex. 
In fact, we need to further restrict to the class of large-cycle networks in order for the semi-directed network topology to be generically identifiable.  While this identifiability result covers a large subset of cycle-networks, 
models on networks with small cycles may be of biological interest, and thus, exploration on 
how to use these models effectively is required.    

Furthermore, this paper introduces a collection of algebraic varieties worth deeper investigation. The varieties in this paper are subvarieties of the join varieties associated to 2-tree mixture models, but, as we have seen, the class of network model varieties has different properties than the class of 2-tree mixture varieties. 
Thus, there remain a number of interesting mathematical questions to address.
For example, the results of Lemma \ref{prop: AllJCideals} might suggest that 
for networks with $n$-leaves, the dimension of the network model
increases with cycle size. However, this has not been proven, and indeed, we 
propose the following conjecture to the contrary.

\begin{conj} Let $\mathcal N_1$ and $\mathcal N_2$ be two $n$-leaf large-cycle networks, then 
$\dim ( \mathcal{V}_{\mathcal N_1} ) = \dim (\mathcal{V}_{\mathcal N_2}) $.
\end{conj}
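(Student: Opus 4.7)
The plan is to prove the conjecture by induction on $n$, showing that every $n$-leaf large-cycle network $\mathcal{N}$ has $\dim \mathcal{V}_\mathcal{N} = 2n$. First, I would record the uniform parameter count: every semi-directed $n$-leaf cycle network has $2n$ edges (by a handshake argument on the rooted version with $n$ leaves, $n-1$ internal tree vertices, one reticulation, and a suppressed root), and the Jukes-Cantor Fourier parameterization becomes a map $\psi_\mathcal{N}:\mathbb{C}^{2n}\to\mathbb{C}^{4^n}$ once $a^e_A$ is set to $1$ for every edge and $\delta$ is absorbed into the reticulation-edge parameters as in the example following Figure \ref{fig: FourierExample}. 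This gives the uniform upper bound $\dim \mathcal{V}_\mathcal{N}\leq 2n$, so the conjecture reduces to showing that the Jacobian of $\psi_\mathcal{N}$ has full column rank $2n$ at a generic point whenever $k\geq 4$. The base case $n=4$ is then Proposition \ref{prop: AllJCideals}: the only $4$-leaf large-cycle network is the $4$-sunlet, and it has variety of dimension $8 = 2\cdot 4$.

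For the inductive step with $n\geq 5$, I would first prove a combinatorial lemma: every $n$-leaf large-cycle network $\mathcal{N}$ admits a leaf $\ell$ whose deletion yields another large-cycle network. If some cycle vertex $v_i$ has $|A_i|\geq 2$, any leaf of $\mathcal{T}_{v_i}$ works, since the cycle length $k\geq 4$ is preserved. Otherwise $\mathcal{N}$ is a $k$-sunlet with $k=n\geq 5$, and any non-reticulation pendant $\ell_i$ (so $i\neq 1$) yields a $(k-1)$-sunlet after suppressing the now degree-two vertex $v_i$, with cycle length $k-1\geq 4$. Combining Proposition \ref{prop: restricting} with the inductive hypothesis $\dim \mathcal{V}_{\mathcal{N}_{|[n]\setminus\{\ell\}}}=2(n-1)$ gives $\dim \mathcal{V}_\mathcal{N}\geq 2(n-1)$, sandwiching $\dim \mathcal{V}_\mathcal{N}\in\{2n-2,2n-1,2n\}$. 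To pin this value at $2n$, I would analyze the two new Fourier parameters introduced by adding $\ell$ back (one for the new pendant edge $e^\ell$, one for the subdivided edge that is split into $e',e''$) and show that their Jacobian columns contribute exactly two linearly independent directions at a generic point.

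The main obstacle is establishing this rank-two increment uniformly across all attachment types. Marginalizing the $n$-leaf Fourier coordinates over $\ell$ (setting $g_\ell = 0$) identifies $a^{e'}_g a^{e''}_g$ with the old parameter $a^e_g$ and kills $a^{e^\ell}_g$, so the genuinely new directions along a generic fiber are the ratio $a^{e'}_g/a^{e''}_g$ and the pendant parameter itself. The cleanest path forward, which I would attempt first, is to exhibit explicit Fourier coordinates $q_I$ with $g_\ell\neq 0$ whose parameterizations depend nontrivially on one of these new directions while remaining constant along the other fiber directions, certifying that the marginalization map has a generically $2$-dimensional fiber on $\mathcal{V}_\mathcal{N}$. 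A more structural alternative is to formalize the toric-fiber-product description sketched in the remark following Proposition \ref{prop: sameskeleton}: realizing $\mathcal{V}_\mathcal{N}$ as a toric fiber product of $\mathcal{V}_{\mathcal{N}_{|[n]\setminus\{\ell\}}}$ with a small auxiliary tree variety glued along the attachment edge would yield the increment of $2$ directly from the dimension additivity of \cite{Sullivant2007}, provided the compatible-grading hypotheses can be verified for cycle-network ideals in the Fourier coordinates.
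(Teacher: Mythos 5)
First, be aware that the paper does \emph{not} prove this statement: it is posed as an open conjecture, supported only by the observation that the Jacobians of the $5$-leaf $4$-cycle and $5$-sunlet parameterizations have the same rank at a random point. So your proposal is not an alternative route to a known proof; it is an attempt at a new result, and judged as such it is a plan rather than a proof. The reduction to the sharper claim $\dim \mathcal{V}_{\mathcal{N}} = 2n$, the leaf-deletion lemma, and the sandwich $2(n-1) \le \dim \mathcal{V}_{\mathcal{N}} \le 2n$ via Proposition \ref{prop: restricting} are all reasonable and consistent with the paper's data, but the decisive step --- that the generic fiber of the marginalization map restricted to $\mathcal{V}_{\mathcal{N}}$ is exactly $2$-dimensional --- is precisely what you label ``the main obstacle'' and then leave open, offering two candidate approaches without carrying out either. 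That step \emph{is} the content of the conjecture, so this is not a detail but the entire difficulty.

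Second, even your upper bound $\dim \mathcal{V}_{\mathcal{N}} \le 2n$ is not delivered by the stated parameter count. The edge count of $2n$ is correct, but in Fourier coordinates each Jukes--Cantor edge carries two parameters $a^e_A$ and $a^e_C = a^e_G = a^e_T$, and the paper deliberately does not set $a^e_A = 1$ (it homogenizes); the dimensions in Proposition \ref{prop: AllJCideals} are for this homogenized version, which is how a $4$-leaf tree with $5$ edges comes out $6$-dimensional. Moreover, absorbing $\delta$ and $1-\delta$ into the two reticulation edges leaves each of those edges with two free parameters, so after all your identifications the naive count still exceeds $2n$; bringing it down to $2n$ requires controlling the generic fiber of $\psi_{\mathcal{N}}$ itself, which is of the same difficulty as the increment step. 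You should also confront the $3$-cycle data as a sanity check: the $4$-leaf $3$-cycle networks have dimension $7 \ne 2\cdot 4$, so the formula $2n$ is genuinely special to $k \ge 4$ and your argument must say where that hypothesis enters --- plausibly only in the base case, since $7 = 5+2$ is consistent with an increment of $2$ starting from the ambient-space-filling $3$-sunlet, but this needs to be stated and verified. Of your two proposed completions, the toric fiber product route sketched after Proposition \ref{prop: sameskeleton} seems the most promising, but verifying the compatible-grading and codimension hypotheses of \cite{Sullivant2007} for cycle-network ideals is itself a nontrivial task that remains to be done.
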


While we are unable to compute the full ideals, the rank of the Jacobian
matrix evaluated at a random point for the 5-leaf $4$-cycle and $5$-sunlet networks is the same, suggesting
that both ideals are the same dimension.
Observe also, that the argument from Example \ref{ex: collapse edge} 
does not apply to 5-sunlet networks. 
This is because when we collapse one of the cycle edges of the
5-sunlet network, the resulting network
is not binary. This was also the case when we collapsed an edge 
in the 4-cycle network in Example \ref{ex: collapse edge}. 
The resulting variety was equivalent to the variety
of a 3-cycle binary network only since the variety of the
3-leaf 3-cycle network with all leaf edges collapsed fills the entire space. 
Thus, it would be interesting to determine if this dimension phenomenon is
isolated to those networks with small cycle size.

The next question we pose comes from the logical step of extending  this work to other models. 

\begin{ques} Is the semi-directed network topology parameter generically identifiable for large-cycle Kimura-2 parameter (Kimura-3 parameter) network models?
\end{ques}

Many of the same techniques should prove fruitful
for these models. The combinatorial arguments used in this paper will apply,
though the number of parameters and the size of the rings
may make the computational steps much more difficult. 
Still, since K2P and K3P are group-based, 
the Fourier transform applies and
finding the necessary invariants is at least within the realm of possibility.
All this of course is only to address the identifiability of the semi-directed
network topology.  
It would also be of practical interest to determine the 
identifiability of the transition matrix parameters.

The identifiability results we obtain in this paper suggest that there could be 
more identifiability issues as we increase the number of reticulation vertices in the
network. For this reason, applying a similar approach to more general classes of networks 
would be of great interest. 
One of the key tools in this paper is Lemma \ref{prop: restriction distinguishable}, which allows us to 
prove the identifiability of networks with any number of leaves
by considering only networks with fewer than five leaves in Section \ref{sec: largekcycles}. 
However, it has already been shown that arbitrary networks cannot be identified
by their subnetworks 
\cite{Huber2015}, 
so this approach has little hope of succeeding in that case.
Instead, the next step might be to examine 
{tree-child} networks which are identifiable from their \emph{trinets}, induced subnetworks
on three leaves \cite{VanIersel2014}. There are some subtleties involved here as well, as the result for trinets applies
to the rooted network topology and we have already seen that there will be indistinguishable
3-leaf {tree-child} networks. Still, it may be possible to make similar arguments by restricting
to $n$-leaf subnetworks for some fixed $n$. Finally, it may be worthwhile to start by examining
slightly more general classes of networks, such as {level-1 networks} with only two 
reticulation vertices.

\section{Acknowledgements}

We would like to thank Seth Sullivant for his insights regarding the toric fiber product and cycle-network ideals.
Colby Long is supported by the Mathematical Biosciences Institute and the 
National Science Foundation under grant DMS-1440386. 
Elizabeth Gross is supported by the National Science Foundation under grant DMS-1620109.

\bibliography{references}
\bibliographystyle{plain}

\end{document}